\newtheorem{prop}{Proposition}
\newtheorem{cor}{Lemma}
\begin{document}

\begin{titlepage}

\begin{center}

\hfill \\
\hfill \\
\vskip 1cm

\title{Bootstrapping 2D CFTs in the Semiclassical Limit}

\author{Chi-Ming Chang$^\bullet$, Ying-Hsuan Lin$^\circ$}

\address{$^\bullet$Center for Theoretical Physics and Department of Physics,  \\
University of California, Berkeley, CA 94704 USA}
\address{$^\circ$Jefferson Physical Laboratory, Harvard University, \\
Cambridge, MA 02138 USA}

\email{cmchang@berkeley.edu, yhlin@physics.harvard.edu}

\end{center}

\abstract{We study two-dimensional conformal field theories in the semiclassical limit.  In this limit, the four-point function is dominated by intermediate primaries of particular weights along with their descendants, and the crossing equations simplify drastically.  For a four-point function receiving sufficiently small contributions from the light primaries, the structure constants involving heavy primaries follow a universal formula.  Applying our results to the four-point function of the $\bZ_2$ twist field in the symmetric product orbifold, we produce the Hellerman bound and the logarithmically corrected Cardy formula that is valid for $h \geq c/12$.}

\vfill

\end{titlepage}

\eject

\tableofcontents

\section{Introduction}

The defining data of a conformal field theory (CFT) consist of a set of distinguished operators (primaries) and a set of structure constants.  On top of these are the requirements of crossing symmetry, that the operator product expansion is associative.  Once these constraints are satisfied, these data together with conformal symmetry generate all the correlation functions in the theory, as sums or integrals of conformal blocks weighted by the structure constants.  The conformal bootstrap program, initiated in \cite{Polyakov:1974gs,Ferrara:1973yt,Mack:1975jr}, aims to solve and classify conformal field theories by analyzing these constraints.

While the conformal bootstrap program achieved huge success in solving and classifying rational conformal field theories in two dimensions \cite{Belavin:1984vu,Knizhnik:1984nr,Gepner:1986wi,Bouwknegt:1992wg,Verlinde:1988sn,Dijkgraaf:1988tf,Moore:1988uz,Moore:1988ss}, outside this ``tamed zoo'', the crossing equations are an infinite set of equations depending on infinitely many variables, and a systematic solution is not known.  Recent developments based on numerical methods allow the extraction of certain information, such as bounds on the gap in operator product expansions and bounds on the central charge \cite{Rattazzi:2008pe,ElShowk:2012ht}.  A more analytic approach to bootstrap is to consider limits in which the crossing equations simplify \cite{Fitzpatrick:2012yx,Komargodski:2012ek,Alday:2015eya,Ooguri:2015Talk}. This paper takes the second approach, and study the crossing equations in the semiclassical limit of two-dimensional conformal field theories.

The semiclassical limit is motivated by holography \cite{Maldacena:1997re,Gubser:1998bc,Witten:1998qj}.  Two-dimensional conformal field theories are holographically dual to three-dimensional quantum gravity in asymptotically anti de-Sitter (AdS) space, whose curvature radius (inverse bulk coupling) is equal to the boundary central charge \cite{Brown:1986nw}.  Perturbatively, the bulk spectrum consists of two distinguished classes of states:  a light spectrum containing boundary gravitons and perturbative string states, whose energies do not scale with the central charge, and
a heavy spectrum of non-perturbative states with energies of the order of the AdS curvature, responsible for the microstates of BTZ black holes \cite{Banados:1992wn}.  In order to examine the collective dynamics of the heavy states, the semiclassical limit takes both the central charge and the operator dimensions large while keeping their ratios fixed.  In this limit, the correlation functions of primary operators admit a perturbative expansion in the inverse central charge.  To leading order, the Virasoro block decomposition of the correlation functions are dominated by intermediate primaries of particular weights (a saddle).  As the cross ratio $x$ varies, correlation functions can exhibit ``phase transitions'' due to discontinuous jumps of the weight of the dominant saddle.

This paper studies the four-point function of identical primary operators in the semiclassical limit.  Here we list a summary of our main results.
\begin{enumerate}
\item  At the crossing symmetric point, if there is a single dominant saddle, then its weight is fixed by conformal symmetry.  Away from the crossing symmetric point, the weight of the dominant saddle must be smaller than this fixed value for $x < 1/2$ and larger for $x > 1/2$.  See Proposition~\ref{Prop1}.

\item  If the four-point function receives sufficiently small contributions from light primaries, then the structure constants involving heaving primaries follow a universal formula, {\it \`a la} Proposition~\ref{Prop2b}.

\item  We study the $\bZ_2$ twist field four-point function in the symmetric product orbifold.  Proposition~\ref{Prop3} presents a logarithmically corrected Cardy formula that is valid for $h \geq c/12$.

\end{enumerate}

The sections are organized as follows.  In Appendix \ref{SemiclassicalLimit}, we give a definition of the semiclassical limit.  In Section~\ref{BootstrapSemiclassical}, we study the crossing equation in the semiclassical limit and derive universal constraints.  In Section~\ref{Applications}, we examine specific examples, including Liouville theory, product orbifold theories, and meromorphic CFTs.  In Section~\ref{gravity}, we discuss the gravity dual of classical Virasoro blocks, and implications of our bootstrap results in the gravity context.

\section{Bootstrap in the semiclassical limit}
\label{BootstrapSemiclassical}

The semiclassical limit of a family of two-dimensional conformal field theories is the limit of large central charge $c$ while simultaneously scaling the operator weights with $c$.  See Appendix~\ref{SemiclassicalLimit} for a more careful definition.  In this limit, the crossing equation simplifies drastically, because except for a measure zero set of cross ratios, the sum over intermediate states in either the $s$-channel or the $t$-channel is dominated by just one saddle.  In theories with a gap of order $c$, we will see that the structure constants exhibit certain universal behaviors in the semiclassical limit.

To simplify the discussion, we omit anti-holomorphic variables, but the generalization is straightforward.

\subsection{Review of the conformal bootstrap}
\label{Review of the conformal bootstrap}

Conformal symmetry constrains the four-point function of primary operators to take the form
\ie\vev{\sigma_a(x_1)\sigma_b(x_2)\sigma_c(x_3)\sigma_d(x_4)} &= x_{14}^{-2h_a}x_{24}^{h_a-h_b+h_c-h_d} x_{34}^{h_a+h_b-h_c-h_d}x_{23}^{h_d-h_a-h_b-h_c} F_{abcd}(x),
\fe
where $h_a$ is the conformal weight of $\sigma_a$, etc, and $F_{abcd}(x)$ is a function of the cross ratio $x = { x_{12}x_{34} / x_{14}x_{32} }$.  The four-point function can factorize in different channels.  The $s$-channel
corresponds to the fusion of the primary operators $\sigma_a(x_1)$ and $\sigma_b(x_2)$, and gives an expansion at $x = 0$.  The operators appearing in the operator product expansion (OPE) of $\sigma_a(x_1)$ and $\sigma_b(x_2)$ are organized into representations of the Virasoro algebra.  Each representation contains a primary operator and its descendants.

An inner product on the vector space of primary operators is provided by the two-point function $\langle \sigma_a (0) \sigma_b(1) \rangle$.  We pick an orthonormal basis $\cP$ with respect to this inner product.  Each primary operator and its descendants contribute to the four-point function by a Virasoro block ${\cal F}(h_a,h_b,h_c,h_d, h, c | x)$.  The four-point function can be expanded as
\ie\label{VBD}
F_{abcd}(x) = \sum_{h} C_{\sigma_a\sigma_b}(h) C_{\sigma_c\sigma_d}(h) {\cal F}(h_a,h_b,h_c,h_d, h, c | x).
\fe
A similar expansion exists in the $t$-channel by fusing $\sigma_a$ with $\sigma_d$, and in the $u$-channel by fusing $\sigma_a$ with $\sigma_c$.

In the rest of this paper, we specialize to the four-point function of identical scalar primaries $\vev{\sigma_{ext} \sigma_{ext} \sigma_{ext} \sigma_{ext}}$.  The expansion coefficients $C^2_{\sigma_{ext}\sigma_{ext}}(h)$ are equal to the sum of structure constants squared over all weight-$h$ primaries appearing in the $\sigma_{ext} \times \sigma_{ext}$ OPE, i.e.,
\ie
C^2_{\sigma_{ext}\sigma_{ext}}({h}) \equiv \sum_{\phi \in {\cal P}^{\sigma_{ext}\times\sigma_{ext}}_{h}} C^2_{\sigma_{ext}\sigma_{ext}\phi}, \quad {\cal P}^{\sigma_{ext}\times\sigma_{ext}}_{h}=\{\phi \in {\cal P}^{\sigma_{ext} \times \sigma_{ext}} | h_\phi = h \}.
\fe
They are real and non-negative if we assume unitarity.  Crossing symmetry, or equivalently the associativity of the OPE algebra equates the four-point function expanded in different channels.  We will analyze the crossing equation between the $s$- and $t$-channels
\ie
F(x) \equiv F_{\sigma_{ext} \sigma_{ext} \sigma_{ext} \sigma_{ext}}(x) = F(1-x).
\fe

\subsection{Crossing symmetry in the semiclassical limit}

Given a sequence of CFTs, the semiclassical limit of a four-point function $\vev{\sigma_{ext} \sigma_{ext}\sigma_{ext} \sigma_{ext}}$ is the limit of large central charge $c$ while taking the operator weights $h_{ext}$ to scale with $c$ (fixed $m_{ext} = h_{ext} / c$).  When speaking of correlation functions, in general it is impossible to keep track of a particular primary operator in a sequence of CFTs, so the best we can do is to consider ``{correlation function densities}'' in the semiclassical limit.  See Appendix~\ref{SemiclassicalLimit} for a definition.  We omit these details in this section, and simply refer to them as correlation functions.

It is observed that the Virasoro block admits a semiclassical expansion \cite{Belavin:1984vu,Zamolodchikov:1985ie}
\ie
\label{LCCB}
{\cal F}(h_{ext}, h, c | x) &= \exp\left[ - {c \over 6} f \left( m_{ext}, m | x \right) \right] g\left(m_{ext}, m, c | x \right),
\\
g(m_{ext},m,c|x)  &= \sum_{k = 0}^\infty c^{-k} {g_k(m_{ext}, m | x)}.
\fe
The functions $f$ and $g_k$ can be computed order by order in an $x$-expansion.  The expansions for $f$ and $g_0$ to the first few orders are presented in Appendix~\ref{SemiclassicalCB}.  Our analysis will assume the following numerically observed properties of the semiclassical Virasoro blocks.  For fixed $m_{ext} \leq 1/2$,
\begin{enumerate}
\item  $f'(m_{ext}, m | 1/2)$ is monotonically decreasing in $m$, and crosses zero only once.
\label{NSfp}
\item  $f(m_{ext}, m_2 | x) - f(m_{ext}, m_1 | x)$ is monotonically decreasing in $0 < x < 1$, for arbitrary fixed internal weights $m_2 > m_1\geq 0$.
\label{NSfm2-fm1}
\item  $g_0(m_{ext},m|x) > 0$ for all internal weights $m \geq 0$ and cross ratios $0 \leq x < 1$.
\label{Positiveg}
\end{enumerate}
To use these properties, we will {\bf restrict to $\bf m_{ext} \leq 1/2$}, which is a relatively loose bound compared to either the operators accounting for the microstates of the zero mass BTZ black hole, $m_{BTZ} = 1/24$, or the Hellerman bound \cite{Hellerman:2009bu} on the gap in the spectrum of primaries $m_{gap} \leq 1/12$.  The study of $m_{ext} > 1/2$ is left for future investigation.

In order to satisfy crossing symmetry, the summed structure constants squared which are the coefficients in the Virasoro block decomposition \eqref{VBD} must also admit a semiclassical expansion
\ie\label{SCOPEC}
C^2_{\sigma_{ext}\sigma_{ext}}(m) = \exp\big[ c \,p_{\sigma_{ext}}(m) \big]\big(q_{\sigma_{ext}}(m)+ {\cal O}\left(1/\sqrt{c}\right)\big).
\fe
In theories with a discrete spectrum, the summed structure constants squared is a sum of delta functions.  In the semiclassical limit, this distribution can be approximated by a continuous distribution plus isolated delta functions,
\ie
q_{\sigma_{ext}}(m) = \sum_i q^i_{\sigma_{ext}}\delta(m-m_i) + \sqrt{c}\,q^{cont}_{\sigma_{ext}}(m).
\fe
Here we adopt a normalization such that if the CFT has an order $c$ gap above the vacuum state, then $q_{\sigma_{ext}}^{vac} = 1$.  As we will see, the $\sqrt{c}$ factor in front of the continuous distribution $q^{cont}_{\sigma_{ext}}(m)$ is required for it to be comparable with the delta functions in the large central charge expansion.

For notational simplicity, we define the {\it classical branching ratio} as
\ie
\label{BranchingRatio}
S_{\sigma_{ext}}(m | x) \equiv  p_{\sigma_{ext}}(m)- {1 \over 6} f \left( m_{ext}, m | x \right).
\fe
The crossing equation at large $c$ is
\ie
\label{LCBS}
{\cal O}(1 / c) &= \bigg\{ \sum_{m \in {\cal S}_x} \exp\left[ c \,S_{\sigma_{ext}}\left( m | x \right) \right] q_{\sigma_{ext}}(m) \widetilde g_0\left(m_{ext}, m | x \right) \bigg\} - (x \to 1-x),
\fe
where ${\cal S}_x$ denotes the set of weights that maximize $S_{\sigma_{ext}}(m | x)$ globally, and $\widetilde g_0\left(m_{ext}, m | x \right)$ is defined to also include the one-loop contribution near the saddle point,
\ie
& \widetilde g_0\left(m_{ext}, m | x \right)
\\
&= \begin{cases}
    g_0\left(m_{ext}, m | x \right) &\text{if $m$ is at a delta function,}
    \\
    g_0\left(m_{ext}, m | x \right) \times \sqrt{- {2\pi  \over c \, \partial_m^2 S_{\sigma_{ext}}(m|x)}} &\text{if $m$ is inside the continuum.}
\end{cases}
\fe 
We presently analyze this crossing equation and discuss its consequences, restricting to real cross ratios lying within $0 < x < 1$.

\paragraph{Near the crossing symmetric point.}  Let us Taylor expand the right hand side of \eqref{LCBS} at the crossing symmetric point $x = 1/2$.  Since the right hand side is an odd function with respect to $x \to 1 - x$, all even power terms vanish.  The coefficients of the odd power terms to leading order at large $c$ give
\ie\label{BSEQ}
0 = \sum_{m \in {\cal S}_{1/2}} f'(m_{ext},m | 1/2 )^{2j-1} q_{\sigma_{ext}}(m)\widetilde g_0\left(m_{ext}, m | 1/2 \right) \quad \forall j \in {\mathbb N}.
\fe
Suppose the crossing equation is dominated by finitely many points, ${\cal S}_{1/2} = \{\widehat m_1,\widehat m_2,\cdots,\widehat m_n\}$, which we order by $\widehat m_1 < \widehat m_2 < \cdots < \widehat m_n$.  The fact that ${\cal S}_{1/2}$ is the set of global maxima means
\ie
S_{\sigma_{ext}}(\widehat m_1 | 1/2) = S_{\sigma_{ext}}(\widehat m_2 | 1/2) &= \dotsb = S_{\sigma_{ext}}(\widehat m_n | 1/2),
\fe
and this was used to factor out the exponential when going from \eqref{LCBS} to \eqref{BSEQ}.  By Property~\ref{NSfp} of the classical Virasoro block, $f'(m_{ext}, m | 1/2)$ is monotonically decreasing in $m$ and crosses zero exactly once, hence the equations \eqref{BSEQ} imply that the saddles must form pairs satisfying\footnote{First, $q(m) \widetilde g_0(m)$ does not vanish, otherwise $m$ would not appear in \eqref{BSEQ}.  Suppose $n > 1$.  In the large $j$ limit, by the monotonicity property of $f'(m)$, only $m_1$ and $m_n$ dominate the equation, and we conclude in
\eqref{equalfpq} for $k = 1$.  $m_1$ and $m_n$ drop out of \eqref{BSEQ}.  
Reiterate for other $k$.
}
%
\ie
\label{equalfpq}
f'(m_{ext}, \widehat m_{k} | 1/2) &=- f'(m_{ext}, \widehat m_{n+1-k} | 1/2),
\\
q_{\sigma_{ext}}(\widehat m_{k}) \widetilde g_0\left(m_{ext}, \widehat m_{k} | 1/2 \right) &=  q_{\sigma_{ext}}(\widehat m_{n+1-k}) \widetilde g_0\left(m_{ext}, \widehat m_{n+1-k} | 1/2 \right),
\fe
for $k = 1, \dotsc, [n/2]$.  Note that the last equation relates the one-loop (in $1/c$) part of the structure constants for pairs of saddles.  If $n$ is odd, then there is a lone saddle $\widehat m_{n+1\over2}$ sitting at the solution to $f'(m_{ext}, \widehat m_{n+1\over2} | 1/2) = 0$.

The multiplicity of the saddles is lifted in a small neighborhood ${1/2}-\epsilon < x < {1/2}+\epsilon$ of the crossing symmetric point.  The saddle with the largest $f'$ value dominates the region ${1/2}-\epsilon  < x < {1/2}$, and its partner which has the smallest $f'$ value dominates the region ${1/2} < x < {1/2}+\epsilon$.\footnote{Suppose $S_{\sigma_{ext}}(m|x)$ is a smooth function near $x=1/2$ and $m=\widehat m_k$ (the generalization to non-smooth $S_{\sigma_{ext}}(m|x)$ is simple).  It has an expansion at $x=1/2$,
\ie
S_{\sigma_{ext}}(m|x)&=S_{\sigma_{ext}}(\widehat m_k|1/2)+(x-1/2)\partial_xS_{\sigma_{ext}}(\widehat m_k|1/2)+{1\over 2}(m-\widehat m_k)^2\partial_m^2S_{\sigma_{ext}}(\widehat m_k|1/2)
\\
& \hspace{.5in} +(x-1/2)(m-\widehat m_k)\partial_m\partial_xS_{\sigma_{ext}}(\widehat m_k|1/2)+\cdots.
\fe 
When we move away from the crossing symmetric point, $x=1/2+\epsilon$, the new saddle point is at
\ie
m_\epsilon = \widehat m_k - {\partial_m \partial_x S_{\sigma_{ext}}(\widehat m_k|1/2)\over \partial_m^2 S_{\sigma_{ext}}(\widehat m_k|1/2)}\epsilon + {\cal O}(\epsilon^2),
\fe
and therefore
\ie
S_{\sigma_{ext}}(m_\epsilon|1/2+\epsilon) &= S_{\sigma_{ext}}(\widehat m_k|1/2)-{\epsilon\over 6}f'(m_{ext},\widehat m_k|1/2)+{\cal O}(\epsilon^2).
\fe
}

Focusing on a small neighborhood $1/2-\epsilon < x < 1/2+\epsilon$ but ignoring the possible multiplicity at the point $x = 1/2$, we conclude that there can be two scenarios (depending on whether $n = 1$ or $n \geq 2$ at $x = 1/2$).
\begin{enumerate}
\item  The four-point function is dominated by a single saddle at $m = 
\widehat m(m_{ext})$, solving the equation
\ie
\label{hatmm}
\boxed{
f'(m_{ext}, \widehat m(m_{ext}) | 1/2) = 0.}
\fe
In this case, the four-point function is smooth around $x=1/2$.  The solution $\widehat m(m_{ext})$ as a function of $m_{ext}$ is plotted in Figure~\ref{hatm}.
\item The four-point function is dominated by a saddle at $m = \widehat m_1$ for $1/2 - \epsilon < x < 1/2$ and another saddle at $m = \widehat m_2$ for $1/2 < x < 1/2 + \epsilon$, where $\widehat m_1$ and $\widehat m_2$ satisfy the relation
\ie
f'(m_{ext}, \widehat m_{1} | 1/2) &=- f'(m_{ext}, \widehat m_{2} | 1/2).
\fe
A phase transition occurs at $x = 1/2$.
\end{enumerate}
Next we prove the following proposition.
%
\begin{prop}
\label{Prop1}
{\it  The four-point function is dominated by saddles with weights $m \le \widehat m(m_{ext})$ for $x < 1/2$, and saddles with weights $m \ge \widehat m(m_{ext})$ for $x > 1/2$, where $\widehat m(m_{ext})$ is the unique solution to \eqref{hatmm}.  If there is a single saddle at $x = 1/2$, then its weight is $m = \widehat m(m_{ext})$.}
\end{prop}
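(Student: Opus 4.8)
The plan is to reduce the proposition to two ingredients: a \emph{monotonicity} statement, asserting that any dominant weight can only increase as $x$ increases, and the identification of the dominant weight at the crossing symmetric point with $\widehat m(m_{ext})$, the latter being essentially already contained in \eqref{BSEQ}. Throughout I write $S_{\sigma_{ext}}(m|x) = p_{\sigma_{ext}}(m) - \tfrac16 f(m_{ext},m|x)$ and recall that ${\cal S}_x$ is the set of global maximizers of $S_{\sigma_{ext}}(\cdot|x)$.

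First I would establish the monotonicity lemma: for any $0 < x_1 < x_2 < 1$, every $a \in {\cal S}_{x_1}$ and every $b \in {\cal S}_{x_2}$ satisfy $a \le b$. This is an exchange (rearrangement) argument. Maximality of $a$ at $x_1$ and of $b$ at $x_2$ gives $S_{\sigma_{ext}}(a|x_1) \ge S_{\sigma_{ext}}(b|x_1)$ and $S_{\sigma_{ext}}(b|x_2) \ge S_{\sigma_{ext}}(a|x_2)$. Adding these two inequalities, the $x$-independent term $p_{\sigma_{ext}}$ cancels and one is left with $f(m_{ext},a|x_1) - f(m_{ext},a|x_2) \le f(m_{ext},b|x_1) - f(m_{ext},b|x_2)$, i.e.\ $\phi(a) \le \phi(b)$ for the single-variable function $\phi(m) \equiv f(m_{ext},m|x_1) - f(m_{ext},m|x_2)$. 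Property~\ref{NSfm2-fm1} states precisely that $f(m_{ext},m'|x) - f(m_{ext},m|x)$ is strictly decreasing in $x$ whenever $m' > m$; evaluating at $x_1 < x_2$ shows $\phi$ is strictly increasing, so $\phi(a) \le \phi(b)$ forces $a \le b$. I emphasize that this step uses only Property~\ref{NSfm2-fm1} and the defining maximality of ${\cal S}_x$, and needs no knowledge of the structure-constant exponent $p_{\sigma_{ext}}$.

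Next I would anchor the saddle at $x = 1/2$ and assemble the result. In the single-saddle case, \eqref{BSEQ} at $j=1$ reads $0 = f'(m_{ext},m_0|1/2)\,q_{\sigma_{ext}}(m_0)\,\widetilde g_0(m_0|1/2)$ with $q_{\sigma_{ext}}\widetilde g_0 \neq 0$, so $f'(m_{ext},m_0|1/2)=0$ and hence $m_0 = \widehat m(m_{ext})$ by the uniqueness of the zero in Property~\ref{NSfp}; this is exactly the final sentence of the proposition. In the phase-transition case, the pairing \eqref{equalfpq} together with the monotonicity of $f'$ in $m$ forces the smallest element of ${\cal S}_{1/2}$ to have $f'(m_{ext},\cdot|1/2) > 0$, hence to lie strictly below $\widehat m(m_{ext})$, and the largest element to lie above it. Combining with the lemma then closes the argument: for $x < 1/2$ any $a \in {\cal S}_x$ obeys $a \le \min {\cal S}_{1/2} \le \widehat m(m_{ext})$, and for $x > 1/2$ any $a \in {\cal S}_x$ obeys $a \ge \max {\cal S}_{1/2} \ge \widehat m(m_{ext})$.

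The cleanest part is the exchange inequality itself; I expect the main obstacle to be the bookkeeping around the set-valued and possibly discontinuous nature of ${\cal S}_x$ at phase transitions, which the set-valued phrasing of the lemma is designed to absorb, together with the standing assumptions (attainment of the maxima and non-vanishing of $q_{\sigma_{ext}}\widetilde g_0$ on ${\cal S}_x$, relying on Property~\ref{Positiveg} and the discreteness hypotheses used around \eqref{BSEQ}). The key conceptual step, which makes the ordering hold with no input whatsoever about the structure constants, is the observation that the $x$-independent function $p_{\sigma_{ext}}$ drops out of $\phi$, so that Property~\ref{NSfm2-fm1} alone dictates the monotonicity of the dominant weight.
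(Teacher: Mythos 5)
Your proposal is correct and takes essentially the same route as the paper: Property~\ref{NSfp} together with the pairing \eqref{equalfpq} anchors the set ${\cal S}_{1/2}$ so that it straddles $\widehat m(m_{ext})$, and Property~\ref{NSfm2-fm1} forces the dominant weight to move monotonically with the cross ratio. The only difference is organizational --- the paper runs the argument by contradiction and simply asserts the monotonicity step, whereas you state it as a standalone lemma and supply the exchange argument (adding the two maximality inequalities so that $p_{\sigma_{ext}}$ cancels) that justifies it.
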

\begin{proof}
Let us assume the contrary, that the four-point function at some cross ratio $x_* < 1/2$ is dominated by a saddle point with weight $m_* > \widehat m(m_{ext})$.  We recall the observed properties of the classical Virasoro blocks from earlier in this section.  Property~\ref{NSfp} implies that $\widehat m_1 \le \widehat m(m_{ext}) \le \widehat m_2$.  Property~\ref{NSfm2-fm1} implies that the four-point function in the entire range of cross ratios $x_* \le x< 1/2$ should be dominated by saddle points with weights $m \ge m_*$; in particular, this means that $\widehat m_1 \ge m_*$ in the neighborhood $1/2 - \epsilon < x < 1/2$.  Hence we arrive at contradicting inequalities.  
\end{proof}
%
The following lemma will be useful later.
\begin{cor}
\label{Cor1}  If the inequality
\ie
p_{\sigma_{ext}}(m) - {1 \over 6} f( m_{ext}, m | 1/2) \leq p_{\sigma_{ext}}(0) - {1 \over 6} f (m_{ext}, 0 | 1/2)
\fe
is obeyed for $m \le \widehat m(m_{ext})$, then it is obeyed for all $m\ge 0$.
\end{cor}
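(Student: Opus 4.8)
The inequality to be established is precisely the statement that the vacuum saddle $m=0$ globally dominates the classical branching ratio $S_{\sigma_{ext}}(m\,|\,1/2)$ of \eqref{BranchingRatio} at the crossing symmetric point. The difficulty is that both the hypothesis and the desired conclusion live at the single value $x=1/2$, whereas the only tool that lets me compare different internal weights is Property~\ref{NSfm2-fm1}, which is a statement about the $x$-dependence of $f$. The plan is therefore to leave the point $x=1/2$: I will propagate the hypothesis to the whole interval $x<1/2$, invoke Proposition~\ref{Prop1} to control the dominant saddle there, and then return to $x=1/2$ by continuity.

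First I would introduce the difference $\Delta(m\,|\,x) \equiv S_{\sigma_{ext}}(m\,|\,x)-S_{\sigma_{ext}}(0\,|\,x)$. Since $p_{\sigma_{ext}}$ carries no $x$-dependence, this equals $[p_{\sigma_{ext}}(m)-p_{\sigma_{ext}}(0)]-\tfrac16[f(m_{ext},m\,|\,x)-f(m_{ext},0\,|\,x)]$. Applying Property~\ref{NSfm2-fm1} with $m_2=m>0$ and $m_1=0$, the bracketed combination of blocks is monotonically decreasing in $x$, so $\Delta(m\,|\,x)$ is monotonically increasing in $x$. Consequently, for every $m\le\widehat m(m_{ext})$ and every $x\le 1/2$ the hypothesis $\Delta(m\,|\,1/2)\le 0$ upgrades to $\Delta(m\,|\,x)\le\Delta(m\,|\,1/2)\le 0$, i.e. $S_{\sigma_{ext}}(m\,|\,x)\le S_{\sigma_{ext}}(0\,|\,x)$ for all weights below $\widehat m(m_{ext})$ throughout $x\le1/2$.

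The second ingredient is Proposition~\ref{Prop1}, which guarantees that for $x<1/2$ the four-point function is dominated by a saddle of weight $m_{\mathrm{dom}}\le\widehat m(m_{ext})$, i.e. the global maximum of $S_{\sigma_{ext}}(\cdot\,|\,x)$ is attained at some such weight. Combining this with the previous step gives $\max_m S_{\sigma_{ext}}(m\,|\,x)=S_{\sigma_{ext}}(m_{\mathrm{dom}}\,|\,x)\le S_{\sigma_{ext}}(0\,|\,x)$; since the vacuum is itself a competitor in the maximization, the two sides must be equal, so in fact $S_{\sigma_{ext}}(m\,|\,x)\le S_{\sigma_{ext}}(0\,|\,x)$ holds for \emph{all} $m\ge0$ once $x<1/2$. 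Finally, because $f(m_{ext},m\,|\,x)$ is analytic in $x$ on $(0,1)$ while $p_{\sigma_{ext}}$ is $x$-independent, $S_{\sigma_{ext}}(m\,|\,x)$ is continuous at $x=1/2$, and letting $x\to1/2^-$ preserves the non-strict inequality, yielding $S_{\sigma_{ext}}(m\,|\,1/2)\le S_{\sigma_{ext}}(0\,|\,1/2)$ for all $m\ge0$.

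The main obstacle is conceptual rather than computational: one must recognize that Proposition~\ref{Prop1} is exactly the missing input. Without it the claim is false, since a $p_{\sigma_{ext}}(m)$ that grows fast enough at large $m$ could satisfy the bound only below $\widehat m(m_{ext})$ yet violate it above; Property~\ref{NSfm2-fm1} alone cannot detect this, as it constrains only the $x$-profile of $f$ and not the weight profile of $p_{\sigma_{ext}}$. The one technical point left to verify is the continuity of the classical block in $x$ near $1/2$, needed to pass the limit, together with the harmless edge case $\widehat m(m_{ext})=0$, in which the argument collapses to Proposition~\ref{Prop1} applied directly to the vacuum.
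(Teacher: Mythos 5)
Your proof is correct, but it takes a different route from the paper's. The paper's own argument is a two-line contradiction that never leaves the crossing-symmetric point: if the inequality failed at some $m_* > \widehat m(m_{ext})$, then the global maximum of $S_{\sigma_{ext}}(\,\cdot\,|1/2)$ would be attained only at weights strictly above $\widehat m(m_{ext})$, so no dominant saddle at $x=1/2$ would have weight $\le \widehat m(m_{ext})$ --- contradicting the fact that the saddle set at the crossing-symmetric point must contain a weight $\widehat m_1 \le \widehat m(m_{ext})$, which follows from the pairing structure \eqref{equalfpq} forced by Property~\ref{NSfp}. You overlooked that the crossing analysis constrains the saddles \emph{at} $x=1/2$ itself, which is what makes this short proof possible; instead you detour through $x<1/2$, propagating the hypothesis downward in $x$ via Property~\ref{NSfm2-fm1}, invoking Proposition~\ref{Prop1} on the open interval, and returning to $x=1/2$ by continuity of $f$ in the cross ratio. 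Each route has something to offer: the paper's is more economical and needs no continuity assumption beyond the existing setup, while yours establishes a strictly stronger intermediate fact --- under the hypothesis, the vacuum is a global maximizer of $S_{\sigma_{ext}}(\,\cdot\,|x)$ for \emph{every} $x<1/2$, i.e.\ no phase transition occurs before the crossing-symmetric point --- which is essentially the content of Proposition~\ref{Prop2b} rather than just of this lemma. Your closing remark that the statement would be false without the crossing input (since $p_{\sigma_{ext}}$ could otherwise grow arbitrarily at large $m$) is also a correct and worthwhile observation.
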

\begin{proof}
The contrary implies the existence of a classical branching ratio $S_{\sigma_{ext}}(m_*|x)$ at some weight $m_* > \widehat m(m_{ext})$ that is larger than $S_{\sigma_{ext}}(m|x)$ for all $m \leq \widehat m(m_{ext})$.  Then there is no saddle with weight $m \leq \widehat m(m_{ext})$, contradicting $\widehat m_1 \le \widehat m(m_{ext})$.
\end{proof}


%

%
\begin{figure}[t]
\centering
\includegraphics[width=.9\textwidth]{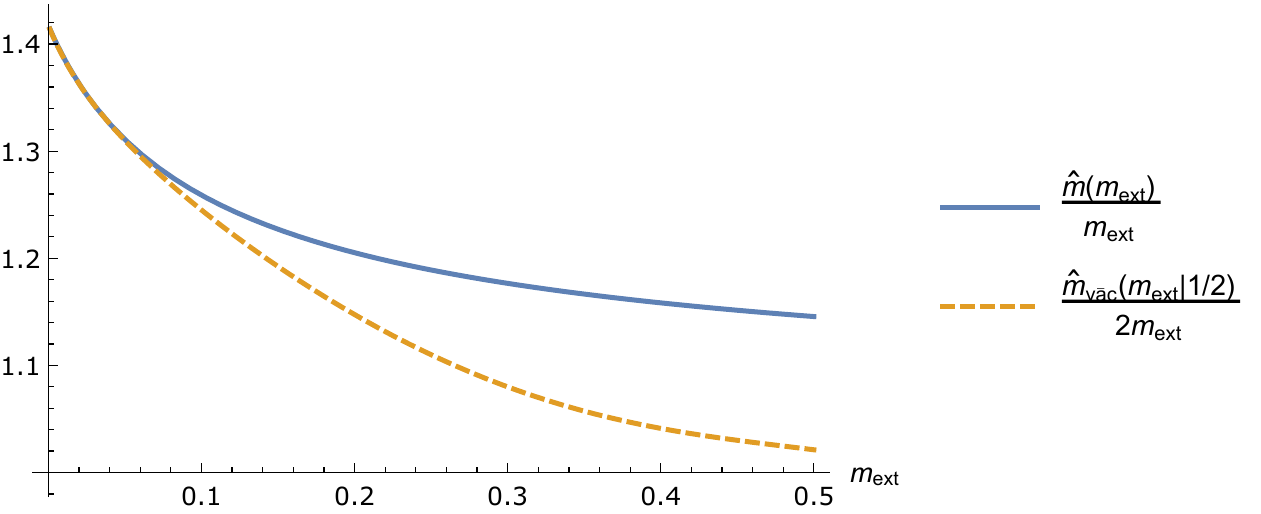}\hspace{.5in}
\caption{The ratios $\widehat m(m_{ext}) \over m_{ext}$ and $\widehat m_{\overline{vac}}(m_{ext}|1/2) \over 2m_{ext}$ as functions of the external weight $m_{ext}$.  See \eqref{hatmm} and \eqref{hatmvac} for definitions.}
\label{hatm}
\end{figure}

\paragraph{Away from the crossing symmetric point.}

At a generic cross ratio $x \neq 1/2$, the four-point function is dominated by a single saddle $m=\widehat m(x)$.  Here we ignore the measure zero set of cross ratios with multiple saddles.  Again Taylor expanding in $x$, we find that $\widehat m(x)$ and $\widehat m(1-x)$ must satisfy the relations\footnote{The $x$ in $\widehat m(x)$ and $\widehat m(1-x)$ are merely labels and should not be expanded.  More precisely, we first Taylor expand the crossing equation and then take the large $c$ limit.  The saddle condition is the same for all Taylor coefficients.
}
\ie
\label{equalSfpq}
f'(m_{ext},\widehat m(x)|x) &= -f'(m_{ext},\widehat m(1-x)|1-x),
\\
S_{\sigma_{ext}}(\widehat m(x) | x) &=  S_{\sigma_{ext}}(\widehat m(1-x) | 1-x),
\\
q_{\sigma_{ext}}(\widehat m(x))\widetilde g_0\left(m_{ext}, \widehat m(x) | x \right) &= q_{\sigma_{ext}}(\widehat m(1-x))\widetilde g_0\left(m_{ext}, \widehat m(1-x) | 1-x \right).
\fe

We point out a curious observation.  The $s$-channel block appearing in the crossing equation can be written via the fusion transformation \eqref{FusionIntegral} as an integral over $t$-channel blocks with different weights \cite{Ponsot:1999uf, Teschner:2001rv, Ponsot:2003ju}.  We show in Appendix~\ref{Saddle correspondence across channels} that for an $s$-channel block of weight $m \leq 1/24$ at a fixed cross ratio $x$, the fusion transformation is in fact dominated in semiclassical limit by the $t$-channel block whose weight is determined by equation \eqref{SaddleCorrespondenceEq}.  We find numerically that the solution to this equation coincides with the solution $\widehat m(1-x)$ to the first equation in \eqref{equalSfpq}.

\subsection{Universality of structure constants}
\label{sec:UniversalC}

A main result of the bootstrap is that both the classical $p_{\sigma_{ext}}(m)$ and one-loop $q_{\sigma_{ext}}(m)$ parts (in $1/c$) of the structure constants $C^2_{\sigma_{ext}\sigma_{ext}}(m)$ are related for the pair of dominant saddles $(\widehat m(x), \widehat m(1-x))$ at any cross ratio $0 < x < 1$, as is seen from the second and third equations in \eqref{equalSfpq}.

Let us consider a CFT whose {\bf spectrum of primaries has an order $c$ gap above the vacuum state},\footnote{More precisely, let us consider a sequence of CFTs labeled by $i = 1, 2, \dotsc$, with monotonically increasing central charges $c_i$, that admits a semiclassical limit.  For any given weight $h$, there exists an $I_h$ such that the only primary appearing in the OPE with weight below $h$ is the vacuum, for all $i \geq I_h$.  This is analogous to the condition in \cite{Hartman:2013mia} on the density of states.
}
so that $p_{\sigma_{ext}}(0) = 0$ and $q_{\sigma_{ext}}(0) = 1$.
The four-point function is dominated by the vacuum block near $x = 0$.  As the cross ratio is increased to some $x = x_{PT}$, this four-point function undergoes a phase transition and becomes dominated by a different saddle.  Let us denote by $\widehat m_{\overline{vac}}(m_{ext}, x)$, for $0 < x \leq 1/2$, the solution to
\ie\label{hatmvac}
\boxed{
f'(m_{ext},0|x) = - f'(m_{ext}, \widehat m_{\overline{vac}}(m_{ext}, x)|1-x),}
\fe
which is the $t$-channel saddle partner of the $s$-channel vacuum block.  Since $C^2_{\sigma_{ext}\sigma_{ext}}(0) = 1$ for the isolated vacuum block, $p_{\sigma_{ext}}(m)$ and $q_{\sigma_{ext}}(m)$ are unambiguously fixed for all $m > \widehat m_{\overline{vac}}(m_{ext}, x_{PT})$,
\ie
\label{UniversalC}
& \boxed{
p_{\sigma_{ext}}(\widehat m_{\overline{vac}}(m_{ext}, x)) =  {1 \over 6} f ( m_{ext}, \widehat m_{\overline{vac}}(m_{ext}, x) | 1 - x ) - {1 \over 6} f (m_{ext}, 0 | x ), }
\\
& \boxed{
 q_{\sigma_{ext}}(\widehat m_{\overline{vac}}(m_{ext}, x)) = {\widetilde g_0 (m_{ext},0 | x ) \over \widetilde g_0 (m_{ext}, \widehat m_{\overline{vac}}(m_{ext}, x) | 1-x)}.}
\fe
After the phase transition, even though the equations \eqref{equalSfpq} continue to relate pairs of saddles, we do not have an invariant reference point like the vacuum was before the phase transition, and therefore universality is lost.  If the only phase transition occurs at $x = x_{PT} = 1/2$, then this universality holds in the widest range $m \geq \widehat m_{\overline{vac}} (m_{ext}, 1/2)$.  The above analysis did not assume the positivity of the structure constants squared, but positivity is not violated by the universal formula \eqref{UniversalC} according to Property~\ref{Positiveg} of the one-loop Virasoro block.

Figure~\ref{hatm0} shows the function $\widehat m_{\overline{vac}}(m_{ext}, x)$ for $m_{ext}$ between $1/2400$ and $1/2$, and suggests that $\widehat m_{\overline{vac}}(m_{ext}, x) / m_{ext}$ is not very sensitive to $m_{ext}$.  Figure~\ref{universal} plots the universal classical and one-loop structure constants, $p_{\sigma_{ext}}(m)$ and $q_{\sigma_{ext}}(m)$.  High orders in the $x$-expansion are needed for the precision of results at large $m$, but the point here is universality.  Note that the structure constants $C^2_{\sigma_{ext}\sigma_{ext}}(m) \sim \exp(c\,p_{\sigma_{ext}}(m))$ decay faster than $16^{-mc}$, as is required by the convergence property of the four-point function \cite{Maldacena:2015iua}.

\begin{figure}[t]
\centering
\includegraphics[width=.7\textwidth]{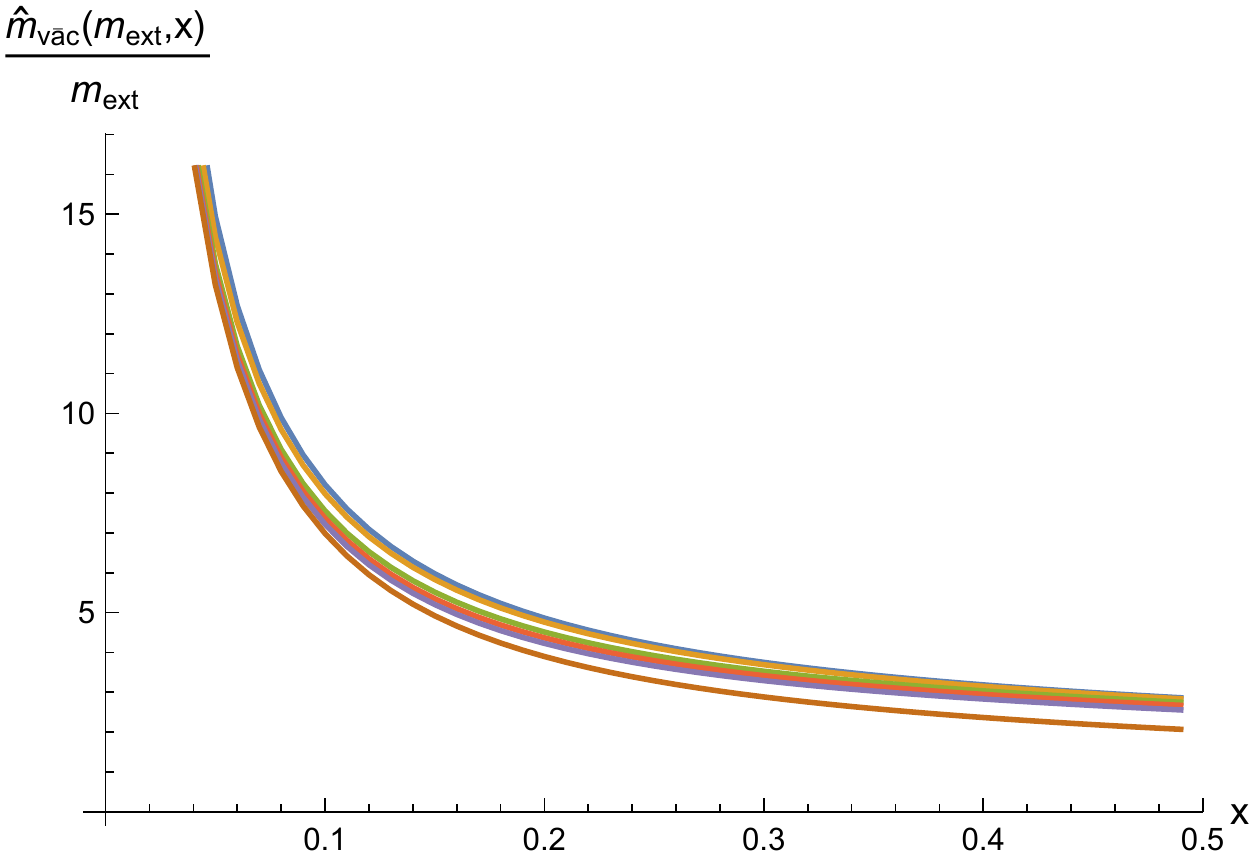}
\caption{The weight $\widehat m_{\overline{vac}}(m_{ext}, x)$
as a function of the cross ratio $x$ for external weights $m_{ext} = \A/24$.  See \eqref{hatmvac} for a definition.  The curves from top to bottom are for $\A = 1/100, 1/10, 1/2, 1, 2, 12$.}
\label{hatm0}
\end{figure}

If the external operators have a gapless OPE (the gap is of order $c^0$), then generically the $s$-channel saddle moves continuously away from the vacuum as $x$ is increased, until it reaches $\widehat m(m_{ext})$, which is the solution to Equation \eqref{hatmm}.  No sharp phase transition occurs ($x_{PT} = 0$).

Intuitively, the phase transition cross ratio $x_{PT}$ should be larger for theories with larger gaps.  However, even if the gap is large, as long as it is smaller than $\widehat m (m_{ext})$, we can tune the structure constants large to make $x_{PT}$ as small as we want.  For this reason, there does not seem to be a bound on $x_{PT}$ by the size of the gap.

Combining the above considerations with Lemma~\ref{Cor1}, we are led to the following propositions.
\begin{prop}
\label{Prop2a}
The gap (in the OPE of identical external operators) is bounded above by $m_{gap} \leq \widehat m_{\overline{vac}}(m_{ext}, 1/2)$.
\end{prop}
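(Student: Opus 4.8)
The plan is to argue by contradiction: I would assume $m_{gap} > \widehat m_{\overline{vac}}(m_{ext},1/2)$ and show that crossing symmetry at the symmetric point $x=1/2$ cannot then hold. The logical engine is Lemma~\ref{Cor1}, which I use to force the vacuum into the set of dominant saddles ${\cal S}_{1/2}$, together with the pairing relation \eqref{equalfpq}, which then pins the vacuum's obligatory crossing partner to the weight $\widehat m_{\overline{vac}}(m_{ext},1/2)$ defined in \eqref{hatmvac}.

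First I would record the ordering $\widehat m(m_{ext}) < \widehat m_{\overline{vac}}(m_{ext},1/2)$. Since $\widehat m(m_{ext})$ is the zero of the decreasing function $f'(m_{ext},\,\cdot\,|1/2)$ (Property~\ref{NSfp}, i.e.\ the solution to \eqref{hatmm}) and $\widehat m(m_{ext})>0$, one has $f'(m_{ext},0|1/2)>0$; the partner weight in \eqref{hatmvac}, where $f'$ equals $-f'(m_{ext},0|1/2)<0$, therefore lies above that zero. Consequently the assumption $m_{gap}>\widehat m_{\overline{vac}}(m_{ext},1/2)$ implies that no non-vacuum primary has weight $m\le\widehat m(m_{ext})$; thus no such $m$ carries a branching-ratio contribution, and the inequality of Lemma~\ref{Cor1} holds vacuously on $0\le m\le\widehat m(m_{ext})$. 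The lemma then gives $S_{\sigma_{ext}}(m|1/2)\le S_{\sigma_{ext}}(0|1/2)$ for all $m\ge0$, so that the vacuum (with $p_{\sigma_{ext}}(0)=0$, $q_{\sigma_{ext}}(0)=1$) is a global maximizer and $0\in{\cal S}_{1/2}$.

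Next I would feed this into the crossing equation \eqref{BSEQ}. The vacuum cannot be the sole element of ${\cal S}_{1/2}$: the single-saddle version of \eqref{BSEQ} would read $0=f'(m_{ext},0|1/2)^{2j-1}\,q_{\sigma_{ext}}(0)\,\widetilde g_0(m_{ext},0|1/2)$, which is nonzero because $f'(m_{ext},0|1/2)>0$ and $\widetilde g_0>0$ by Property~\ref{Positiveg}. Hence ${\cal S}_{1/2}$ contains a further saddle, and by \eqref{equalfpq} the vacuum ($\widehat m_1=0$) pairs with the largest saddle $\widehat m_n$ satisfying $f'(m_{ext},\widehat m_n|1/2)=-f'(m_{ext},0|1/2)$, which by \eqref{hatmvac} is exactly $\widehat m_n=\widehat m_{\overline{vac}}(m_{ext},1/2)$. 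Since every element of ${\cal S}_{1/2}$ carries nonzero structure constants, there is a primary of weight $\widehat m_{\overline{vac}}(m_{ext},1/2)$, forcing $m_{gap}\le\widehat m_{\overline{vac}}(m_{ext},1/2)$ and contradicting the assumption. (In the complementary regime $m_{gap}\le\widehat m(m_{ext})$ the claimed bound is immediate, since $\widehat m(m_{ext})<\widehat m_{\overline{vac}}(m_{ext},1/2)$.)

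The step I expect to require the most care is the passage from ``the vacuum is a global maximizer'' to ``a genuinely populated partner saddle sits at $\widehat m_{\overline{vac}}(m_{ext},1/2)$.'' The pairing \eqref{equalfpq} was derived under the assumption that ${\cal S}_{1/2}$ is a finite ordered set of isolated saddles, via the large-$j$ argument in the footnote to \eqref{BSEQ}; I must check that it survives when the continuum $q^{cont}_{\sigma_{ext}}$ participates or when the maximum is attained on a degenerate plateau, so that the coincidence $S_{\sigma_{ext}}(\widehat m_{\overline{vac}}|1/2)=S_{\sigma_{ext}}(0|1/2)$ reflects an actual primary rather than a purely formal equality of branching ratios. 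This is the one place where non-routine bookkeeping enters; the rest is a direct chain of implications.
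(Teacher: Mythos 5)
Your proof is correct and takes essentially the same route the paper intends: the paper gives no explicit proof of Proposition~\ref{Prop2a}, saying only that it follows by ``combining the above considerations with Lemma~\ref{Cor1}'', and those considerations are exactly the ingredients you assemble --- Lemma~\ref{Cor1} to force the vacuum into ${\cal S}_{1/2}$, the pairing relation \eqref{equalfpq} at the crossing symmetric point to produce a populated partner saddle with nonvanishing $q\,\widetilde g_0$, and the definition \eqref{hatmvac} to identify that partner as $\widehat m_{\overline{vac}}(m_{ext},1/2)$. Your contradiction framing, the check $f'(m_{ext},0|1/2)>0$, and the caveat about finitely many isolated saddles simply make explicit assumptions the paper leaves implicit.
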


\begin{prop}
\label{Prop2b}
If the following condition is satisfied
\ie
\label{PropositionBound}
p_{\sigma_{ext}}(m) \leq {1 \over 6} f (m_{ext}, m | 1/2) - {1 \over 6} f( m_{ext}, 0 | 1/2) \quad \forall m\le\widehat m(m_{ext}),
\fe
then the only phase transition occurs at $x = 1/2$, and $p_{\sigma_{ext}}(m)$ and $q_{\sigma_{ext}}(m)$ follow the universal formula \eqref{UniversalC} for $m \geq \widehat m_{\overline{vac}}(m_{ext}, 1/2)$.
\end{prop}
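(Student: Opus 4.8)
The plan is to recognize that the hypothesis \eqref{PropositionBound} is precisely the premise of Lemma~\ref{Cor1} evaluated at the crossing symmetric point, and then to combine that lemma with the monotonicity Property~\ref{NSfm2-fm1} to show that the vacuum saddle remains globally dominant throughout $0 < x \le 1/2$. Once vacuum dominance is established on this whole interval, the phase transition cannot occur before $x = 1/2$, while Proposition~\ref{Prop1} forbids it from occurring after; hence $x_{PT} = 1/2$ and the universal formula \eqref{UniversalC} applies in its widest range.

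First I would rewrite the hypothesis. Using $p_{\sigma_{ext}}(0) = 0$, the inequality \eqref{PropositionBound} is exactly $S_{\sigma_{ext}}(m | 1/2) \le S_{\sigma_{ext}}(0 | 1/2)$ for all $m \le \widehat m(m_{ext})$, which is the premise of Lemma~\ref{Cor1}. Applying that lemma immediately extends the bound to all $m \ge 0$, so the vacuum is a global maximizer of the classical branching ratio at $x = 1/2$.

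Next I would propagate this dominance to every $x \in (0, 1/2]$. Consider the difference $\Delta(m, x) \equiv S_{\sigma_{ext}}(m | x) - S_{\sigma_{ext}}(0 | x) = p_{\sigma_{ext}}(m) - \tfrac{1}{6}\big[ f(m_{ext}, m | x) - f(m_{ext}, 0 | x) \big]$. By Property~\ref{NSfm2-fm1}, the bracketed quantity is monotonically decreasing in $x$ for each fixed $m > 0$, so $\Delta(m, x)$ is monotonically increasing in $x$. Since the previous step gives $\Delta(m, 1/2) \le 0$ for all $m \ge 0$, monotonicity yields $\Delta(m, x) \le 0$ for all $0 < x \le 1/2$ and all $m \ge 0$. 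Thus the vacuum dominates on the entire interval, so no phase transition occurs before $x = 1/2$, i.e.\ $x_{PT} \ge 1/2$. On the other side, Proposition~\ref{Prop1} guarantees that for $x > 1/2$ any dominant saddle has weight $m \ge \widehat m(m_{ext}) > 0$, ruling out vacuum dominance there; therefore $x_{PT} = 1/2$ exactly. With the sole phase transition pinned at $x = 1/2$, I would invoke the saddle-pairing relations \eqref{equalSfpq}, using the isolated vacuum block as the invariant reference point as in the discussion preceding the proposition, to conclude that $p_{\sigma_{ext}}$ and $q_{\sigma_{ext}}$ obey \eqref{UniversalC} for $m \ge \widehat m_{\overline{vac}}(m_{ext}, 1/2)$.

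The main obstacle I anticipate is the bookkeeping in the monotonicity step: one must keep the direction of the inequality straight under Property~\ref{NSfm2-fm1}, handle the degenerate case $m = 0$ and the endpoint behavior as $x \to 0$, and verify that ``vacuum dominance for all $x \le 1/2$'' genuinely excludes an earlier transition, given that ${\cal S}_x$ is defined as the set of \emph{global} maximizers of $S_{\sigma_{ext}}$. A secondary subtlety is confirming that the universality argument, developed before the proposition under the standing assumption $x_{PT} = 1/2$, extends down exactly to $m = \widehat m_{\overline{vac}}(m_{ext}, 1/2)$ rather than only to larger $m$.
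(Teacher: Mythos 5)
Your proposal is correct and follows essentially the same route the paper intends: the hypothesis \eqref{PropositionBound} feeds into Lemma~\ref{Cor1} to make the vacuum the global maximizer of $S_{\sigma_{ext}}(m|1/2)$, Property~\ref{NSfm2-fm1} propagates that dominance to all $x \le 1/2$, Proposition~\ref{Prop1} pins the sole transition at $x=1/2$, and the pairing relations \eqref{equalSfpq} with the vacuum as reference yield \eqref{UniversalC} down to $\widehat m_{\overline{vac}}(m_{ext},1/2)$. This is precisely the combination of ``the above considerations with Lemma~\ref{Cor1}'' that the paper invokes, just spelled out explicitly.
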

The quantities $\widehat m(m_{ext})$ and $\widehat m_{\overline{vac}}(m_{ext}, 1/2)$ and are the unique solutions to the equations \eqref{hatmm} and \eqref{hatmvac}, and their numerical values are plotted in Figure~\ref{hatm}.  The entire discussion in this section can be easily generalized to include the anti-holomorphic sector.

\begin{figure}[H]
\centering
\subfigure[Classical $p_{\sigma_{ext}}(m)$]{
\includegraphics[width=.7\textwidth]{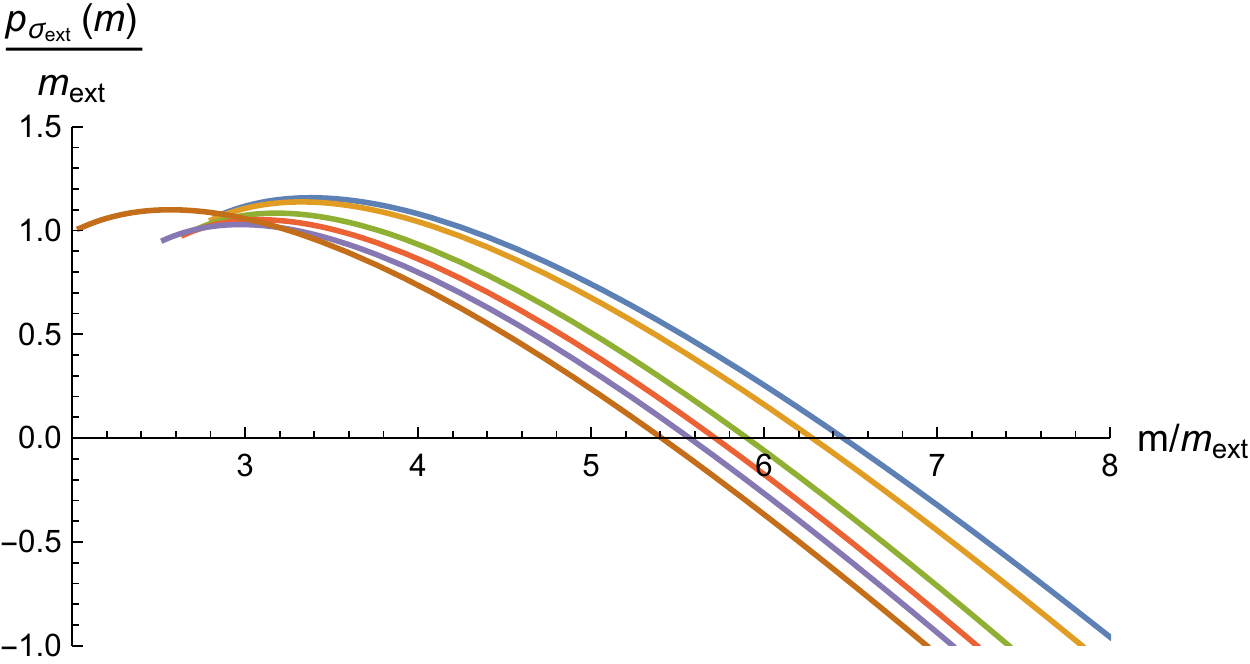}
}
\\
~
\\
~
\\
\subfigure[One-loop $q_{\sigma_{ext}}(m)$]{
\includegraphics[width=.7\textwidth]{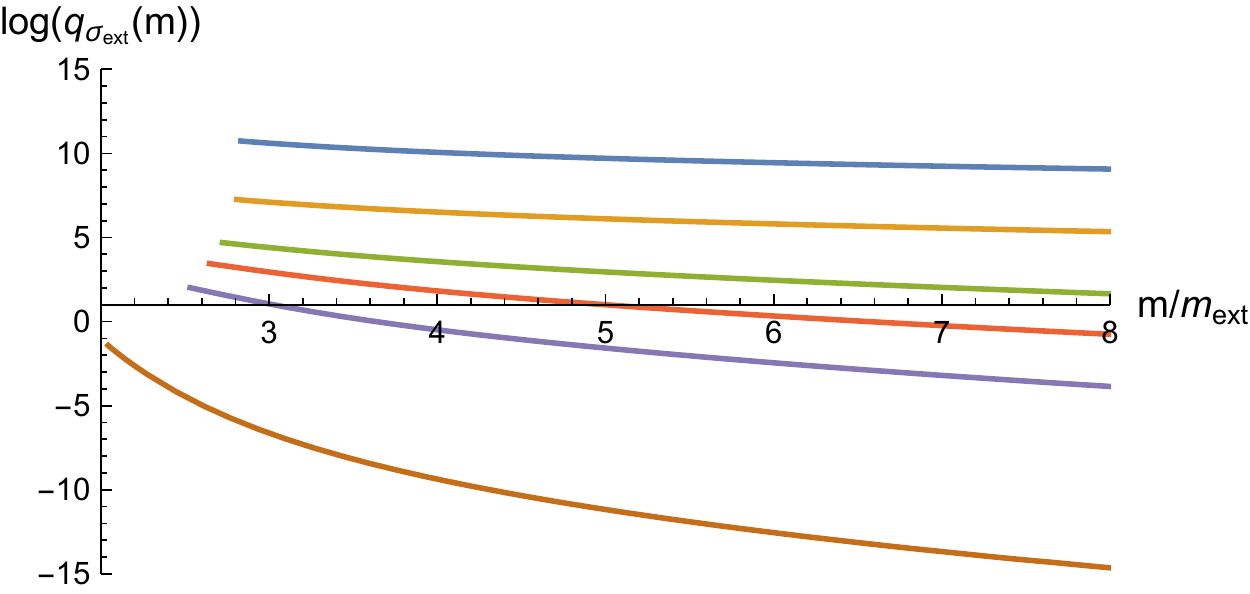}
}
\caption{The universal classical $p_{\sigma_{ext}}(m)$ and one-loop $q^{cont}_{\sigma_{ext}}(m)$ parts of the structure constants
as functions of the internal weight $m$, for external weights $m_{ext} = \A/24$.  See \eqref{UniversalC} for definitions.  The curves from top to bottom in both (a) and (b) are for $\A = 1/100, 1/10, 1/2, 1, 2, 12$.  
}
\label{universal}
\end{figure}

\section{Applications}
\label{Applications}
We examine a few theories in the semiclassical limit:  Liouville theory, product orbifold theories, and meromorphic CFTs.  Liouville theory and the untwisted sector four-point function in the product Ising model provide basic sanity checks of our results.  They both exhibit no phase transition, and at the crossing symmetric point, there is a single saddle whose weight is determined by Proposition~\ref{Prop1}.  We will explicitly see the movement of the dominant saddle as the cross ratio varies.

Twisted sector correlators in product orbifold CFTs are of various physical interests.  The semiclassical limit of product orbifold CFTs can be achieved in two ways, either by taking the number of copies to be large, or by taking the central charge of a single copy to be large.  The first limit is of interest in the symmetric product orbifold of ${\rm T}^4$ or K3, where the twisted sector states correspond to long strings in ${\rm AdS}_3 \times {\rm S}^3 \times ({\rm T}^4 {\rm~or~} {\rm K3})$ \cite{Dijkgraaf:1996xw}; a large number of copies gives a weakly coupled bulk description.  The second limit appears in the computation of higher genus partition functions and Renyi entropies in holographic theories \cite{Yin:2007gv,Headrick:2010zt, Hartman:2013mia}.  By considering the $\bZ_2$ twist field four-point function, we will recover the semiclassical version of the Hellerman bound on the gap in the spectrum of primaries \cite{Hellerman:2009bu}, and the logarithmically corrected Cardy formula that is valid for $h \geq c/12$ \cite{Cardy:1986ie,Hartman:2014oaa,Carlip:2000nv}.  Furthermore, we give a condition for there to be a single phase transition in the second Renyi entropy, which was argued to be true in holographic theories by \cite{Headrick:2010zt, Hartman:2013mia}.

\subsection{Liouville theory}

\begin{figure}[t]
\centering
\includegraphics[width=.7\textwidth]{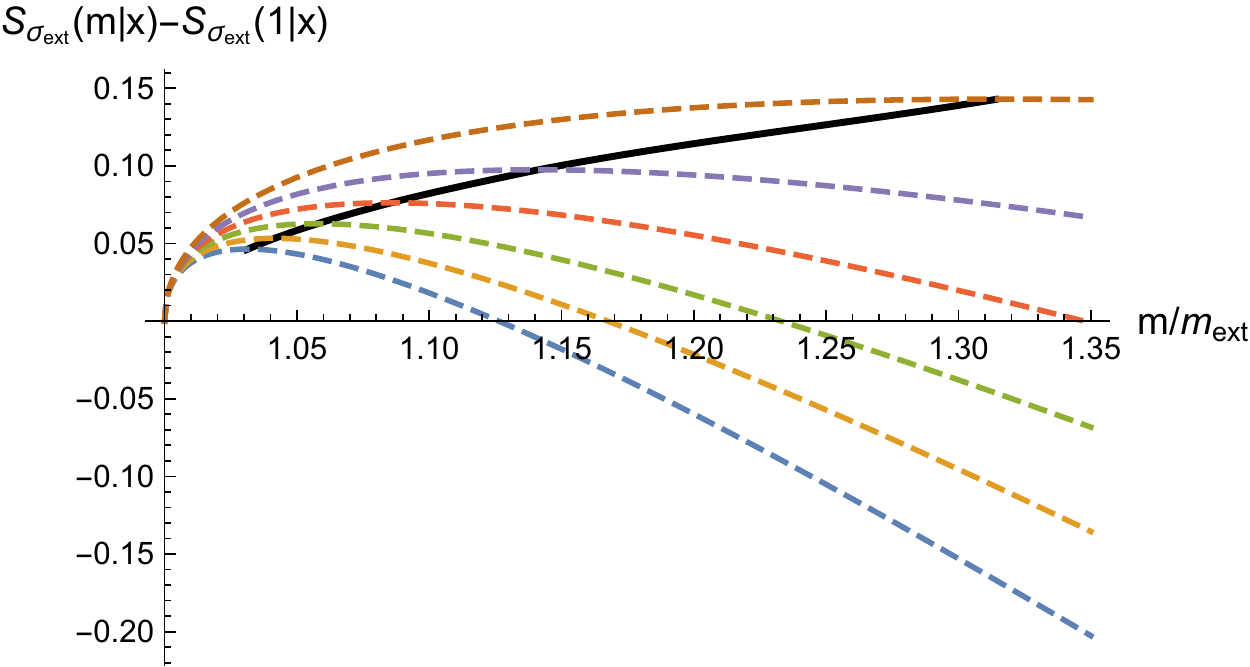}
\caption{The ground state $m_{ext} = 1/24$ four-point function in Liouville theory.  The dashed lines plot the classical branching ratio $S_{\sigma_{ext}}(m|x)$ (defined in \eqref{BranchingRatio}) as a function of the internal weight $m$, for cross ratios $x = 10^{(\A-5)/10}/2$ with $\A = 0, 1, \dotsc, 5$ from bottom to top.  The solid line traces the dominant saddle as the cross ratio is varied.  The dominant saddle is at $m = 1.32 \, m_{ext}$ (semiclassical: $\widehat m(m_{ext}) = 1.32 \, m_{ext}$) at the crossing symmetric point.}
\label{figliouville}
\end{figure}

Liouville theory is the simplest example of a CFT with a semiclassical limit.\footnote{See \cite{Nakayama:2004vk} for a review of Liouville theory.}  It does not contain a vacuum state, and the spectrum of primaries is continuous above the ground state of weight $h_{ground} = (c-1)/24$.  A closed form formula for the structure constants was proposed in \cite{Dorn:1992at,Dorn:1994xn,Zamolodchikov:1995aa,Teschner:1995yf}, and was mathematically proven to satisfy crossing symmetry in \cite{Ponsot:1999uf,Teschner:2001rv,Teschner:2001hk}.  In fact, many properties of the Virasoro blocks were discovered in the study of Liouville theory \cite{Ponsot:1999uf,Teschner:2001rv,Teschner:2001hk}.  Here we use Liouville theory to check the results of our semiclassical bootstrap analysis.

Consider the four-point function of identical operators of weight $h_{ext} = m_{ext} c$ in the semiclassical limit.  Since a vacuum is absent, we expect that the dominant saddle should move continuously from $m_{ground} = 1/24$ to $\widehat m(m_{ext})$ (the unique solution to \eqref{hatmm}) as we vary the cross ratio from $x=0$ to $1/2$.

In the semiclassical limit, the continuous spectrum of primaries in Liouville theory is parameterized by
\ie
\eta &= {1\over 2} - \sqrt{{1\over 4}-6m} \in {1 \over 2} + i \bR_{\geq0}, \quad m \geq {1 \over 24}.
\fe
The structure constants reduce to the on-shell classical Liouville action on a three-punctured sphere \cite{Zamolodchikov:1995aa}
\ie
C^2_{\eta_{ext}, \eta_{ext}}(\eta) = \exp \left[ {2\over b^2} \,{\rm Re}\,S^{(cl)}(\eta_{ext},\eta_{ext},\eta) \right],
\fe
where
\ie
\text{Re}\,S^{(cl)}(\eta_{ext}, \eta_{ext}, \eta) &= -H(2\eta_{ext} + \eta - 1) -H(2\eta_{ext}-\eta ) - 2H(\eta) +H(0)
\\
& \hspace{.5in} + H(2\eta_{ext}) + H(2\eta_{ext}-1) +{ H(2\eta) + H(2\eta-1) \over 2},
\fe
and $H(\eta) = G(\eta) + G(1-\eta) = \int^\eta_{1\over 2}\log\gamma(x)dx$ is the semiclassical limit of the special function $b^2 \Upsilon_b$ (see Appendix~\ref{Semiclassical limit of the special functions}).

At a fixed cross ratio $x$, the four-point function is dominated by a single saddle that solves\footnote{Note that this equation is exactly the same equation that determines the dominant $t$-channel saddle \eqref{SaddleCorrespondenceComplex} in the fusion transformation.  The saddle point analysis of classical Liouville theory was previously considered in \cite{Hadasz:aa}.
}
\ie
{\partial \over \partial m} \Big[ {\rm Re}\,S^{(cl)}(\eta_{ext}, \eta_{ext}, \eta) - f(m_{ext}, m | x) \Big] = 0.
\fe
Figure~\ref{figliouville} shows the distribution of the classical branching ratio (defined in \eqref{BranchingRatio}) as the cross ratio varies.  The solution at $x = 1/2$ is numerically verified to be equal to $m = \widehat m(m_{ext})$, as is required by conformal symmetry.

\subsection{Product Ising model}

\begin{figure}[t]
\centering
\includegraphics[width=.7\textwidth]{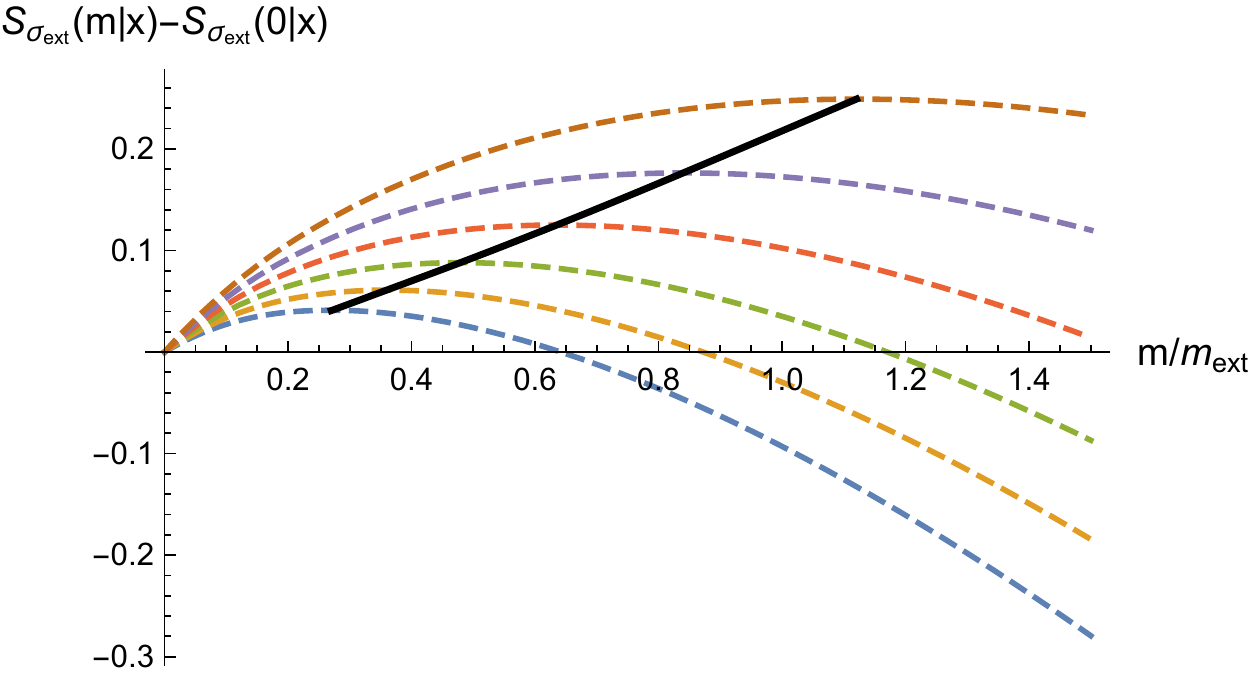}
\caption{The $\sigma^{64}$ ($m_{ext} = 1/8$) four-point function in the product Ising model.  The dashed lines plot the classical branching ratio  $S_{\sigma_{ext}}(m|x)$ (defined in \eqref{branch}) for scalars as a function of the internal weight $m$, for cross ratios $x = 10^{(\A-5)/10}/2$ with $\A = 0, 1, \dotsc, 5$ from bottom to top.  The solid line traces the dominant saddle as the cross ratio is varied.  At the crossing symmetric point, the dominant weight is at $m = 1.12 \, m_{ext}$.  It further approaches the semiclassical value $\widehat m(m_{ext}) = 1.24 \,m_{ext}$ as the number of copies is increased.}
\label{figising}
\end{figure}

Consider the product of $n$ copies of the Ising model, which has central charge $c = n/2$.  The four-point function of the product spin field $\sigma_{ext} = \sigma^n$, which has weight $m_{ext} = \bar m_{ext} = {1 / 8}$, is the $n$-th power of the single copy four-point function
\ie
F(x, \bar x) = |x(1-x)|^{-1/4} \left( \left| {1+\sqrt{1-x} \over 2} \right| + \left| {1-\sqrt{1-x} \over 2x} \right| \right).
\fe
The structure constants can be obtained by decomposing $F(x, \bar x)^n$ into Virasoro blocks (of finite central charge).  At large $n$, we expect the behavior of the structure constants to obey our results from the semiclassical bootstrap.  Figure~\ref{figising} shows the distribution of the classical branching ratio for scalars, defined as in \eqref{BranchingRatio} but with a smoothly interpolated\footnote{Here the classical Virasoro block $f$ includes both the holomorphic and anti-holomorphic factor.}
\ie
\label{branch}
S_{\sigma_{ext}}(m|x) = p_{\sigma_{ext}}(m) - {f(m_{ext}, m|x) \over 6} = c^{-1} \log C^2_{\sigma_{ext}\sigma_{ext}}(m) - {f(m_{ext}, m|x) \over 6},
\fe
as the cross ratio is varied.  The dominant saddle moves continuously from the ground state $m = 0$ to $m = \widehat m(m_{ext})$ as the cross ratio $x$ varies from 0 to 1/2.

\subsection{$\bZ_2$ orbifold, character expansion, and Renyi entropy}
\label{Z2Twist}

The four-point function of $\bZ_2$ twist fields in the symmetric product orbifold can be lifted to a torus partition function, the modular invariance of which can be written in the form of a crossing equation.  Denote by $q(x) = \exp(i \pi \tau(x))$ the elliptic nome of $x$, and by $\cal P$ the function\footnote{The elliptic nome  is defined by
\ie
\log q(x) \equiv - {\pi K(1-x) \over K(x)}.
\fe
}
\ie
{\cal P}(h_{ext}, h, c | x) = { (16 q)^{h-{(c-1) / 24}} (x(1-x))^{{(c-1) / 24} - 2h_{ext}} K(x)^{{(c-1) / 4} - 8 h_{ext}} },
\fe
where $K(x) \equiv {}_2F_1(1/2, 1/2, 1 | x)$ is a hypergeometric function.  The function ${\cal P}$ is the prefactor in the elliptic representation of the Virasoro block\footnote{This form appears in the Zamolodchikov recurrence relation \cite{Zamolodchikov:1985ie,Zamolodchikov:1987}.
}
\ie
\label{elliptic}
{\cal F}(h_{ext}, h, c | x) = {\cal P}(h_{ext}, h, c | x) H(h_{ext}, h, c | q),
\fe
where $H(h_{ext}, h, c | q) = 1 + {\cal O}(q)$.  The non-vacuum character $\chi$ is related to $\cal P$ by the identity
\ie
\label{mathidentity}
\chi(q) = {q^{2(h - {(c - 1) / 24})} \over \eta(\tau) } = 16^{-2(h - {c / 24})}  (x (1 - x))^{c / 24} \, {\cal P}({c / 16}, 2h, 2c-1 | x),
\fe
$\eta(\tau)$ being the Dedekind eta function.  The vacuum character is $\chi^{vac}(q) = (1-q^2) \chi(q)$.  The modular transform $\tau \to - {1 / \tau}$ then translates to crossing $x \to 1-x$ under this identity.

A physical meaning of this equivalence was explained in \cite{Headrick:2010zt}.  Given any CFT $\cal C$ with central charge $c$, we can take the symmetric product orbifold ${\rm Sym}^2 \cal C$ and consider the four-point function of the twist field ${\cal E}$ which has weight $h = {c/16}$ \cite{Lunin:2000yv}.  This four-point function has a lift to the torus partition function $Z(q)$ of $\cal C$, and also computes the second Renyi entropy of two intervals \cite{Lunin:2000yv,Headrick:2010zt}.  Expanding the torus partition function in characters is equivalent to expanding the twist field four-point function in ``${\rm Sym}^2(Vir)$ blocks'' of primary operators of the form $\sigma_{\cal C} \otimes \sigma_{\cal C}$, where $\sigma_{\cal C}$ are primaries in ${\cal C}$.  Note that the ${\rm Sym}^2(Vir)$ descendants of such an operator include infinitely many Virasoro primaries.  It was checked in \cite{Headrick:2010zt} to the first few orders in the $x$-expansion that the ${\rm Sym}^2(Vir)$ blocks are indeed equal to the characters up to a conformal factor.

We presently explain how to apply our results from the semiclassical bootstrap.  Observing that in the semiclassical limit, the function $H$ multiplying the prefactor ${\cal P}$ in the Virasoro block \eqref{elliptic} does not contribute to the classical Virasoro block\footnote{This would not be true if the external weight did not scale as $c/16$.
}
\ie
\lim_{c \to \infty} {\log H({c / 16}, 2 m c, 2c-1 | q) \over c} \to 0,
\fe
we obtain the following identity ($m = h/c$)
\ie
\label{CharacterIdentity}
\left( m - {1 \over 24} \right) \log q^2 &= \left[ {\log 16 \over 12} + {\log(x(1-x)) \over 24} \right]
\\
& \hspace{.5in} + \left[ - {f(m_{ext} = {1/32}, m | x) \over 3} - 2 m \log 16 \right].
\fe
On the left is the classical character, and on the right, the first bracket is a conformal factor, and the second bracket is the classical ${\rm Sym}^2 (Vir)$ block.  We see that in this normalization of the ${\rm Sym}^2 (Vir)$ block, each $\sigma_{\cal C} \otimes \sigma_{\cal C}$ appears in the twist field four-point function with unit coefficient.  Therefore, when decomposing the twist field four-point function with respect to the ${\rm Sym}^2 (Vir)$ blocks, the expansion coefficients $C^2_{{\cal E}{\cal E}}(m)$ are precisely the classical density of primaries in the single copy CFT $\cal C$,
\ie
c\,\rho^P(h) = C^2_{{\cal E}{\cal E}}(m) = \exp[c\,p_{\cal E}(m)] q_{\cal E}(m).
\fe
The factor of $c$ on the left comes from the difference between the measures $dh$ and $dm$.

We can pretend that we are bootstrapping with the classical Virasoro block $f(m_{ext} = {1/32}, m | x)$ by defining an effective classical structure constant 
\ie
p'_{\cal E}(m) \equiv { p_{\cal E}(m) \over 2} - m \log 16.
\fe
Then by Proposition~\ref{Prop2a}, a bound on the gap in the spectrum of primaries in ${\cal C}$ is given by\footnote{Using the identities
\ie
{ q'(x) \over q(x) } &= - {\pi^2 \over 4 x (x-1) K^2(x)}, \quad \log q(x) \log q(1-x) = \pi^2,
\fe
we can show that the $t$-channel saddle parter of the $s$-channel vacuum is at
\ie
\widehat m_{\overline{vac}}(1/32, x) &= {1 \over 24} \left( 1 + {K^2(1-x) \over K^2(x)} \right) = {1 \over 24} \left[ 1 + \left( {\log(q(x)) \over \pi} \right)^2 \right].
\fe
The other formulae in this section are easily derived with the use of these identities.
}
\ie
m_{gap} \leq \widehat m_{\overline vac}(m_{ext} = 1/32, 1/2) = {1 \over 12}, \quad {\rm or} \quad h_{gap} \leq {c \over 12} + {\cal O}(c^0),
\fe
which is the holomorphic version of the Hellerman bound \cite{Hellerman:2009bu}.\footnote{After taking into account the anti-holomorphic sector, we obtain the conventional Hellerman bound \cite{Hellerman:2009bu} for the total weight, $\Delta_{gap} \leq {c / 6}$.
}
Furthermore, if the condition
\ie
\label{Z2pprime}
p'_{\cal E}(m) \leq m (\pi - \log 16) \quad \forall m \le \widehat m(m_{ext} = 1/32) = {1 \over 24}
\fe
is satisfied, then there is only one phase transition at the crossing symmetric point, and $p'_{\cal E}(m)$ obeys (by Proposition~\ref{Prop2b} and Lemma~\ref{Cor1})
\ie
p'_{\cal E}(m) &= {\pi \over \sqrt{6}} \sqrt{m - {1 \over 24}} - m \log 16 \quad \forall m \geq {1\over12},
\\
p'_{\cal E}(m) &\leq m (\pi - \log 16) \quad \forall m \geq 0.
\fe 
Next, the ${\cal O}(c^0)$ part of the non-vacuum character identity \eqref{mathidentity} reads
\ie
{q^{1/12} \over \eta(\tau)} &= { (16 q)^{1/12} \over (x(1-x))^{1/12} K(x)^{1/2} }.
\fe
We can pretend that this is the generic (non-vacuum) one-loop block $g(m_{ext} = {1/32}, m > 0 | x)$.  Assuming that spectrum of primaries has an order $c$ gap above a the vacuum state, the vacuum one-loop block is $g(m_{ext} = {1/32}, m = 0 | x) = (1-q^2) g(m_{ext} = {1/32}, m > 0 | x)$.  Then if \eqref{Z2pprime} holds, the one-loop structure constants obey the universal formula
\ie
q_{\cal E}(m) &= \sqrt{12c\over 24m - 1} \left(1 - e^{-2\pi(24m-1)} \right) \exp\left[ -{\pi \over 6}  {12m - 1 \over \sqrt{24m - 1}} \right]  \quad \forall m \geq {1 \over 24}.
\fe

Translating the above into a statement about the density of primaries, we obtain the next proposition.
\begin{prop}
\label{Prop3}
If the spectrum of primaries has an order $c$ gap above a the vacuum state, and the light spectrum is sparse in the sense of
\ie
\rho^P(h) \leq \exp(2\pi h) \quad \forall h \le {c \over 24},
\fe
then this inequality holds for all $h \geq 0$.  Furthermore, 
the density of primaries for the heavy spectrum $h \geq {c / 12}$ is given by\footnote{A Cardy formula analogous to \eqref{supercardy} but for the density of {\it all} states $\rho(h)$ can be obtained by the convolution
\ie
\rho(h)=\sum_n\rho^P(h-n)p(n)-\sum_n p(n)\delta(1+n-h).
\fe
The result is
\ie
\rho(h)={1\over\sqrt{c}}\exp\left[{2\pi} \sqrt{{c \over 6} \left( h - {c \over 24} \right)}\right]\left[\left(c^3\over 96(h-c/24)^3\right)^{1\over 4}\prod_{k=2}^\infty{1\over \left(1-e^{-2\pi k\sqrt{24h/c-1}}\right)}+{\cal O}(1/c)\right]
\fe
for $h \geq c /12$, in the semiclassical $1/c$ expansion with $h/c$ fixed.
}
\ie
\label{supercardy}
\rho^P(h) &= \sqrt{12 \over 24h-c} ~ \left( 1 - e^{-2\pi\sqrt{24h/c-1}} \right)
\\
& \hspace{.5in} \times \exp\left[ {2\pi} \sqrt{{c \over 6} \left( h - {c \over 24} \right)} - {\pi\over 6} \left( {12h - c \over \sqrt{c\,(24h - c)}} \right) + {\cal O}(1/c) \right]
\fe
in the semiclassical $1/c$ expansion with $h/c$ fixed.
\end{prop}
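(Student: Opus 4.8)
The plan is to read Proposition~\ref{Prop3} as the translation of the already-established statements about the effective classical structure constant $p'_{\cal E}(m)$ and the one-loop coefficient $q_{\cal E}(m)$ into the language of the single-copy density of primaries $\rho^P(h)$. The bridge is the dictionary derived above, $c\,\rho^P(h) = \exp[c\,p_{\cal E}(m)]\,q_{\cal E}(m)$ with $m = h/c$, together with the identification of the twist field ${\cal E}$ as an effective external operator of weight $m_{ext} = 1/32$ supplied by the character identity~\eqref{CharacterIdentity}. Every claim in the proposition then has a counterpart that is already proven, so the work reduces to making the change of variables explicit and to extracting the subleading data in $1/c$.

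For the sparseness statement I would first note that the hypothesis $\rho^P(h) \le \exp(2\pi h)$ for $h \le c/24$ is, at leading exponential order, exactly the condition~\eqref{Z2pprime}. Using $p_{\cal E}(m) = c^{-1}\log\rho^P(h) + {\cal O}(\log c / c)$ together with $p'_{\cal E}(m) = \frac{1}{2}p_{\cal E}(m) - m\log 16$, the inequality $\rho^P(h) \le e^{2\pi h}$ is equivalent to $p_{\cal E}(m) \le 2\pi m$, hence to $p'_{\cal E}(m) \le m(\pi - \log 16)$, while $h \le c/24$ corresponds to $m \le \widehat m(1/32) = 1/24$. Feeding this into Proposition~\ref{Prop2b} and Lemma~\ref{Cor1} with $m_{ext} = 1/32$ extends the bound to all $m \ge 0$, which translates back into $\rho^P(h) \le e^{2\pi h}$ for every $h \ge 0$.

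For the Cardy formula~\eqref{supercardy} I would substitute the universal classical result $p'_{\cal E}(m) = \frac{\pi}{\sqrt6}\sqrt{m - 1/24} - m\log 16$, valid for $m \ge 1/12$, back through the dictionary. This gives $p_{\cal E}(m) = \frac{2\pi}{\sqrt6}\sqrt{m - 1/24}$ and hence the leading exponent $c\,p_{\cal E}(m) = 2\pi\sqrt{\frac{c}{6}(h - c/24)}$, reproducing the dominant Cardy term. The remaining prefactor and the subleading exponent $-\frac{\pi}{6}(12h-c)/\sqrt{c(24h-c)}$ descend from the one-loop coefficient, which Proposition~\ref{Prop2b} fixes through the universal formula~\eqref{UniversalC} as the ratio $\widetilde g_0(1/32, 0 | x)/\widetilde g_0(1/32, \widehat m_{\overline{vac}} | 1-x)$. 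Here the ${\cal O}(c^0)$ character identity supplies $g_0 = (16q)^{1/12}/[(x(1-x))^{1/12} K(x)^{1/2}]$, the $(1-q^2)$ vacuum factor in the numerator produces the bracket $1 - e^{-2\pi\sqrt{24h/c-1}}$ once the saddle relation $\widehat m_{\overline{vac}}(1/32, x) = \frac{1}{24}(1 + K^2(1-x)/K^2(x))$ from the footnote is used to set $(\log q(x)/\pi)^2 = 24m - 1$, and the same relation lets me eliminate $x$ in favor of $h$ everywhere.

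The hard part will be the one-loop computation. Because the heavy spectrum forms a continuum, the $t$-channel factor $\widetilde g_0(1/32, \widehat m_{\overline{vac}} | 1-x)$ carries the Gaussian fluctuation weight $\sqrt{-2\pi/(c\,\partial_m^2 S_{\cal E})}$, so I must evaluate $\partial_m^2 S_{\cal E}$ at the dominant saddle and combine it with the ratio of classical one-loop blocks at cross ratios $x$ and $1-x$ before trading $x$ for $h$ through the hypergeometric and elliptic-nome identities. Marshalling these $K(x)$- and $q(x)$-dependent pieces so that they collapse into the compact prefactor $\sqrt{12/(24h-c)}$ and the stated subleading exponent is the one genuinely computational step; everything conceptual has already been settled by Propositions~\ref{Prop2a} and~\ref{Prop2b} and by Lemma~\ref{Cor1}. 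The density of all states quoted in the footnote then follows by the stated convolution of $\rho^P$ with the partition counting $p(n)$.
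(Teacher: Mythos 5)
Your proposal is correct and follows essentially the same route as the paper: the character identity \eqref{CharacterIdentity} supplies the dictionary $c\,\rho^P(h) = \exp[c\,p_{\cal E}(m)]\,q_{\cal E}(m)$ with effective external weight $m_{ext}=1/32$, the sparseness hypothesis is recognized as condition \eqref{Z2pprime} so that Proposition~\ref{Prop2b} and Lemma~\ref{Cor1} extend the bound and fix $p'_{\cal E}$, $q_{\cal E}$ universally, and the elliptic-nome/hypergeometric identities convert the one-loop data (including the continuum Gaussian factor and the $(1-q^2)$ vacuum correction) into the prefactor and subleading exponent of \eqref{supercardy}. The paper's own derivation in Section~\ref{Z2Twist} proceeds by exactly these steps, so no further comparison is needed.
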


This takes the form of a logarithmically corrected Cardy formula for the density of primaries.  The Cardy formula {\it for primaries} is related to the original formula for the full spectrum by a shift of $c \to c - 1$.  Logarithmic corrections are obtained by a slight modification of \cite{Carlip:2000nv} to be
\ie
\rho^P_{Cardy}(h) &= \sqrt{12 \over 24 h - (c-1)} \exp\left[ 2\pi \sqrt{{c-1 \over 6} \left( h - {c-1\over24} \right)} \right], \quad h \gg c.
\fe
The semiclassical expansion of this formula almost agrees with~\eqref{supercardy} in Proposition~\ref{Prop3}, except for the factor of $1 - e^{-2\pi\sqrt{24h/c-1}}$ that is exponentially suppressed in the Cardy regime $h \gg c$.  That the Cardy formula is also valid for $h \geq c/12$ at large central charges was first discovered in \cite{Hartman:2014oaa}.

\paragraph{Comments on Renyi entropies.}  The four-point function of the $\bZ_2$ twist field computes the second Renyi entropy of two intervals, whereas the four-point functions for the maximal twist fields in the $\bZ_n$ product orbifolds compute higher Renyi entropies.  The results of \cite{Headrick:2010zt, Hartman:2013mia} suggest that in CFTs with a weakly coupled holographic dual, all Renyi entropies should have a single phase transition at the crossing symmetric point.  They argued that this is true in the $\bZ_2$ case assuming a sparse light spectrum, but for higher Renyi entropies it was left as still an open question.  Proposition~\ref{Prop3} makes precise the condition of a sparse spectrum, while Proposition~\ref{Prop2b} gives a condition for there to be a single phase transition.

\subsection{Meromorphic CFTs}
\label{Mero}

Consider the four-point function of holomorphic conserved currents $\sigma_{ext}$ of integer weight $h_{ext}$.  Meromorphy fixes the functional form to be \cite{Headrick:2015gba}
\ie
\label{MeromorphicForm}
F(x) = { \sum_{i = 0}^{4h_{ext}} a_i x^i \over x^{2h_{ext}} (1-x)^{2h_{ext}} },
\fe
which depends on $4 h_{ext} + 1$ coefficients $a_i$.  After imposing crossing symmetry, we are left with $\left[ {4h_{ext} + 4 \over 6} \right] + (\delta_{h_{ext} \!\!\! \mod 6})$ many coefficients.  The division by 6 in the first term can be understood as the order of the crossing group $S_3$, while the second term is due to accidental symmetries when $h_{ext} \in 6\bZ$.

In a CFT $\cal C$ with charge $c$, we can decompose $F(x)$ into Virasoro blocks.  Because the four-point function $F(x)$ only receives contributions from primaries of even integer weights, just from counting the freedom of tuning the coefficients, the gap in the primary spectrum of $\sigma \times \sigma$ is bounded above by \cite{Bouwknegt:1988sv,Keller:2013qqa}
\ie
\label{MeromorphicGap}
h_{gap} \leq 2 \left( \left[ {4h_{ext} + 4 \over 6} \right] + (\delta_{h_{ext} \!\!\!\! \mod 6}) + 1 \right),
\fe
which asymptotes to $4h_{ext} / 3$ at large $h_{ext}$.

\paragraph{$\bZ_2$ twist field four-point function.}  Suppose $\cal C$ is the $\bZ_2$ symmetric product orbifold of a meromorphic CFT $\cal B$ of central charge $c$, and let $\sigma_{ext} = {\cal E}$ be the twist field of weight $h_{ext} = (2c)/32$.  The naive upper bound \eqref{MeromorphicGap} on the gap in ${\cal E} \times {\cal E}$, coming from counting the number of tunable coefficients in \eqref{MeromorphicForm}, is $gap_{{\cal E} \times {\cal E}} \leq (2c)/24$.  By lifting to the torus, this translates in the semiclassical limit to an upper bound on the gap in the spectrum of primaries in $\cal B$, that is the extremal bound for meromorphic CFTs:  $h^{\cal B}_{gap} \leq c/24 + {\cal O}(c^0)$ \cite{Witten:2007kt}.

\begin{figure}[t]
\centering
\includegraphics[width=.7\textwidth]{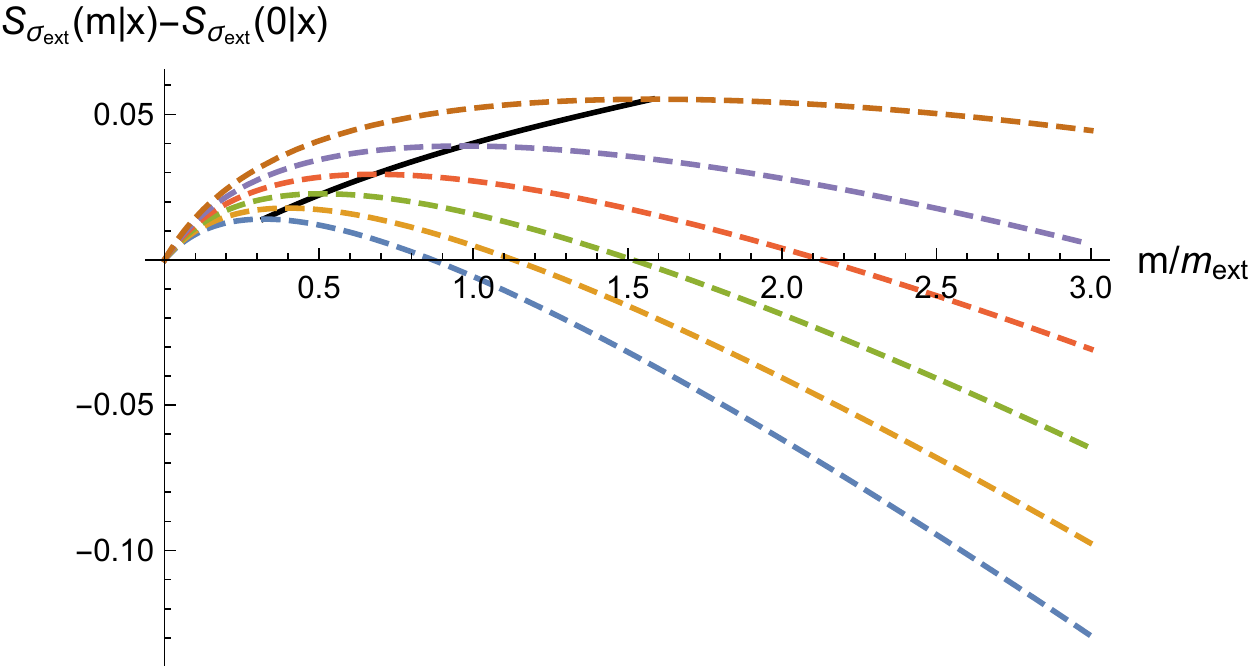}
\caption{The $h_{ext} = 2$ four-point function in the  monster theory.  The dashed lines plot the smoothly interpolated classical branching ratio $S_{\sigma_{ext}}(m|x)$ (defined in \eqref{branch})  as a function of the internal weight $m$, for cross ratios $x = 10^{(\A-5)/10}/2$ with $\A = 0, 1, \dotsc, 5$ from bottom to top.  The solid line traces the dominant saddle as the cross ratio is varied.  The dominant saddle is at $m = 1.58 \, m_{ext}$ (semiclassical: $\widehat m(m_{ext} = 1/24) = 1.32 \, m_{ext}$ or $\widehat m(m_{ext} = 1/12) = 1.27 \, m_{ext}$) at the crossing symmetric point.}
\label{figmonster}
\end{figure}

\paragraph{Extremal CFTs.}  Extremal meromorphic CFTs have central charge $c = 24k$ and a gap of size $h^{(k)}_{gap} = k + 1$, for $k \in \mathbb{N}$.  We take a sequence of operators ${\cal O}^{(k)}$ with weight $h^{(k)}_{ext} = h^{(k)}_{gap}$, and consider the four-point function $\langle {\cal O}^{(k)} {\cal O}^{(k)} {\cal O}^{(k)} {\cal O}^{(k)} \rangle$.  Assuming that this four-point function has a semiclassical limit, as the cross ratio $x$ varies, the dominant saddle cannot move continuously away from the vacuum due to the large gap, and there should be a phase transition at a finite $x = x_{PT}$.  Then for sufficiently large weight $m > \widehat m_{\overline{vac}}(m_{ext}, x_{PT})$, the structure constants should follow the universal formula \eqref{UniversalC}.  Figure~\ref{figmonster} shows the smoothly interpolated classical branching ratio (defined in \eqref{branch}) for the $k = 1$ monster theory, whose four-point function is known explicitly \cite{frenkel1985moonshine}.  Since the gap is at  $m_{gap} = m_{ext}$, the phase transition occurs between the bottom two curves at $x_{PT} \approx 0.16$.  We do not know whether this $c = 24$ picture is actually representative of the semiclassical limit.

\section{Comments on gravity}
\label{gravity}

This section discusses aspects of classical Virasoro blocks in the context of holography.  We first review the worldline prescription that reproduces classical Virasoro blocks in the ``heavy-light'' limit.  We then propose a similar prescription in the ``light'' limit, and discuss the implications of the results of semiclassical bootstrap.

\paragraph{Bulk dual of classical Virasoro blocks in the ``heavy-light'' limit.}

In \cite{Fitzpatrick:2014vua,Hijano:2015rla,Hijano:2015qja}, it was shown that Virasoro blocks in the semiclassical ``heavy-light'' limit are dual to certain worldline actions in a conical defect or BTZ black hole background.  More precisely, in the regime where the weights ($h_i = m_i c$) all scale with the central charge $c$, and $m_3 = m_4 = m_h$ are of order one, but $m_{1}$, $m_{2}$ and $m$ are parametrically small, we can treat the ``light'' operators $\sigma_1$ and $\sigma_2$ as probes of the background created by the ``heavy" operators $\sigma_3$ and $\sigma_4$.  The heavy operators create a bulk geometry that is either a conical defect ($m_h < 1/24$) or a BTZ black hole ($m_h\ge1/24$), and the leading order expansion in $m_{1}, m_2$ of the classical Virasoro block $f(m_1, m_2, m_h, m_h, m|x)$ can be computed by minimizing a worldline action.  The worldline action consists of the geodesic distance from $\sigma_{1}$ on the boundary to a bulk point ${\bf x}$, weighted by $m_{1}$, and the same for $\sigma_2$, plus the geodesic distance from the bulk point ${\bf x}$ to the conical singularity or the BTZ black hole horizon, weighted by $m$.  The position of the bulk point ${\bf x}$ is chosen to minimize this worldline action.

\paragraph{Bulk dual of classical Virasoro blocks in the ``light'' limit.}
Still in the semiclassical limit, consider a different parameter regime, where all weights $m_i$ and $m$ are parametrically much smaller than one.  We expect a similar correspondence between the leading order expansion of the classical Virasoro block $f(m_1,m_2,m_3,m_4,m|x)$ in $m_i, m$, and a worldline action in the AdS$_3$ background.  It is simplest to work in a Poincare patch of AdS$_3$ with metric $ds^2=(dy^2+dxd\bar x)/ y^2$.  The geodesic distance $L({\bf x},{\bf x}')$ between two bulk points $\bf x$ and ${\bf x}'$ is given by
\ie
&L({\bf x},{\bf x}')=\cosh^{-1}(1+u({\bf x},{\bf x}')), \quad u({\bf x},{\bf x}')={(y-y')^2+|x-x'|^2\over 2yy'},
\fe
which diverges as one take the bulk point ${\bf x}'$ to the boundary.  After regularizing this divergence,\footnote{The geodesic distance expanded in $1/y'$ is given by
\ie
L({\bf x},{\bf x}')=\log\left[{y^2+|x-x'|^2\over y}\right]-\log(y')+{\cal O}(y').
\fe
We simply drop the divergent logarithm.}
the geodesic distance from a bulk point ${\bf x}$ to a boundary point $x'$ is
\ie
L({\bf x},x')=\log\left[{y^2+|x-x'|^2\over y}\right].
\fe

For simplicity, we choose identical masses $m_i = m_{ext}$, and consider the worldline action
\ie
& S(x_1,x_2,x_3,x_4)
\\
&= \min_{{\bf x}_a,{\bf x}_b}\left\{ m_{ext} \left[ L({\bf x}_a,x_1)+L({\bf x}_a,x_2)+L({\bf x}_b,x_3)+L({\bf x}_b,x_4) \right] + mL({\bf x}_a,{\bf x}_b) \right\}.
\fe
\begin{figure}[t]
\centering
\includegraphics[width=.4\textwidth]{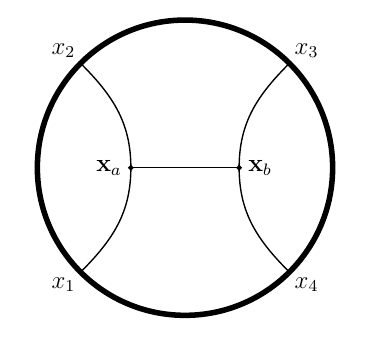}
\caption{The worldline action corresponding to the classical Virasoro block.  The bulk points $\bf x_a$ and $\bf x_b$ are chosen to minimize the total geodesic distance.}
\label{geodesic}
\end{figure}
\hspace{-0.09in}We propose the following relation between the worldline action $S(x_1,x_2,x_3,x_4)$ and the classical Virasoro block $f(m_{ext},m|x)$
\ie\label{ftoS}
&{ {\rm Re}\,f(m_{ext},m|x) \over 6} + \text{($x$-independent term)}
\\
&= - m_{ext} \left[ \log|x_1-x_4|^2 + \log|x_2-x_3|^2 \right]+S(x_1,x_2,x_3,x_4) + {\cal O}(m_{ext}, m)^2.
\fe
By conformal symmetry, we can fix the four points on the boundary at $x_1=-z/2$, $x_2=z/2$, $x_3=-1$, $x_4=1$.  Then by the symmetry of the system, the two bulk points $\bf x_a$, $\bf x_b$ that minimize the worldline action $S(x_1,x_2,x_3,x_4)$ must be located at $x_a=x_b=0$.  Further minimizing with respect to the two remaining variables $y_a$ and $y_b$, we find that the following solution exists as long as the triangular inequality $2m_{ext}>m$ is obeyed,
\ie\label{Szz11}
S(-z/2,z/2,-1,1) &= m\,\cosh^{-1}\left[ 4(2m_{ext}-m)^2+(2m_{ext}+m)^2 z^2\over 4(4m_{ext}^2-m^2)z\right]
\\
&\hspace{1.5in} +m_{ext}\log[64 z^2]-2m_{ext}\log[4m_{ext}^2-m^2].
\fe
Expanding in the cross ratio $x=8z/(2+z)^2$, we find that \eqref{ftoS} is indeed satisfied.

\paragraph{Relation to Ryu-Takayanagi formula.}
The $m=0$ version of the classical block/worldline action correspondence was used in \cite{Hartman:2013mia} to match the entanglement entropy of two intervals in the boundary CFT with the Ryu-Takayanagi formula \cite{Ryu:2006bv}.  There the entanglement entropy is obtained via an analytic continuation of the Renyi entropies.  By the replica trick, the Renyi entropies in a 2D CFT are related to correlation functions of the maximal twist operators in a symmetric product orbifold of the original CFT.  It was argued in \cite{Hartman:2013mia} that the second Renyi entropy of two intervals, computed by the four-point function of twist operators, is dominated by the classical vacuum block, and it was assumed that the higher Renyi entropies behave the same.  Then by analytic continuation, the entanglement entropy is given by a classical vacuum block with parametrically small external $m_{ext}$, that is further mapped to a worldline action with two disconnected pieces.

\paragraph{Semiclassical four-point funcions in the ``light'' limit.} Four-point functions are given by sums of Virasoro blocks weighted by the structure constants squared. Consider a semiclassical four-point function with identical external operators having parametrically small weight $m_{ext}$.  In general, this four-point function receives contributions from Virasoro blocks with $m$ ranging in the entire positive real line.  However, by Proposition~\ref{Prop1}, a four-point function can always be approximated by a single block with $m\le \widehat m(m_{ext} \ll 1) \approx 1.41 \, m_{ext} < 2m_{ext}$ in the appropriate channel, and therefore always admits a worldline description in the bulk.  It would be interesting to investigate the role of the one-loop Virasoro block in this correspondence.

\paragraph{Bound states of ``light'' particles.}  Consider a CFT that is holographically dual to gravity coupled to a ``light'' particle with mass $M_{particle}$ that is of order the AdS curvature, but parametrically small.  In the CFT language, there exists a primary operator $\sigma_{particle}$ with weight $h_{particle} = m_{particle} c \ll 1$.  Proposition~\ref{Prop2b} implies that if no bound state exists with weight $m_{bound} \leq \widehat m(m_{particle})$, then the classical and one-loop structure constants in the ${\sigma_{particle} \times \sigma_{particle}}$ OPE is bounded above by \eqref{PropositionBound}, and universal \eqref{UniversalC} for $m \geq \widehat m_{\overline{vac}}(m_{particle}, 1/2)$.

\paragraph{Bulk dual of generic classical Virasoro blocks.}  One outstanding question is whether there is a similar correspondence between bulk geometry and the classical Virasoro block $f(m_1,m_2,m_3,m_4,m|x)$ with order one $m_i$ and $m$.  In this parameter regime, none of the external and internal operators should approximated as probes, and the classical Virasoro block may correspond to a classical action of a bulk geometry.  A hint of the correct bulk geometry is provided by considering the four-point function of $\bZ_2$ twist fields.  We showed in \eqref{CharacterIdentity} that the classical Virasoro block is related to the classical part of the Virasoro character.  According to \cite{Maldacena:1998bw,Maloney:2007ud}, the classical part the vacuum character is equal to the Einstein-Hilbert action plus the Gibbons-Hawking term evaluated on Euclidean AdS$_3$ with compactified time circle.

\section*{Acknowledgments}
We are grateful to Agnese Bissi, Christopher Keller, Gim Seng Ng, Shu-Heng Shao, David Simmons-Duffin, Yifan Wang, Wenbin Yan, and Xi Yin for useful discussions.  We would like to thank the 2015 BCTP Tahoe Summit and the Simons
Summer Workshop in Mathematics and Physics 2015 for their support during the course of this work.  C.M.C. is supported by BCTP Funding 39862-13070-40-PHBCTP.  Y.H.L. is supported by the Fundamental Laws Initiative Fund at Harvard University.

\appendix

\section{CFTs with a semiclassical limit}
\label{SemiclassicalLimit}

Consider a sequence of CFTs labeled by $i = 1, 2, \dotsc$, with central charges $c_i$ that are monotonically increasing and unbounded. We would like to study the behavior of this sequence of CFTs as $i$ goes to infinity. In general, it is impossible to keep track of a particular primary operator in this sequence of CFTs, as there is no canonical map from the spectrum of the $i$-th CFT to the spectrum of the $(i+1)$-th CFT. The best we can do is to consider the {\it integrated correlation functions}
\ie
\label{Integrated}
& {\cal F}^{(i)}(m_1,\cdots,m_n|x_1,\cdots,x_n)
\\
&\equiv \sum_{h_{a_1}^{(i)}\in[0,m_{1}c_i]}\sum_{h_{a_2}^{(i)}\in[0,m_{2}c_i]}\cdots\sum_{h_{a_n}^{(i)}\in[0,m_{n}c_i]}\vev{{\cal O}^{(i)}_{a_1}(x_1){\cal O}^{(i)}_{a_2}(x_2)\cdots {\cal O}^{(i)}_{a_n}(x_n)},
\fe
where ${\cal O}^{(i)}_{a}$ are primary operators  in the $i$-th CFT with weight $h^{(i)}_{a}$ that have normalized two-point functions.  This sequence is said to have a {\it semiclassical limit}, if the integrated correlation functions admit a perturbative expansion in $1/c$, in the following sense.  First we iteratively define a sequence of functions
\ie
\label{SemiClassicalAssumption}
&{\cal F}_k(m_1,\cdots,m_n|x_1,\cdots,x_n)
\\
&\equiv \lim_{i\to \infty}c_i^{k-1} \Big[ \log {\cal F}^{(i)}(m_1,\cdots,m_n|x_1,\cdots,x_n)-\sum_{m=0}^{k-1}c_i^{1-m}{\cal F}_m(m_1,\cdots,m_n|x_1,\cdots,x_n)\Big],
\fe
where the right hand side may contain logarithmic divergences independent of $x$ and $m$, that need to be properly subtracted while taking the  limit.  We demand that the limit exists and ${\cal F}_k(m_1,\cdots,m_n|x_1,\cdots,x_n)$ are continuous functions in both $m$ and $x$; furthermore their derivatives with respect to $m$ are distributions.\footnote{The definition \eqref{Integrated} of the integrated correlation functions 
is not invariant under orthogonal transformations on primary operators of the same weight.  This ambiguity may correspond to different limits \eqref{SemiClassicalAssumption}, and some of these limits may not exist.  We thank Xi Yin for pointing this out.}  Then we define the semiclassical integrated correlation functions by a formal power series
\ie\label{FormalPowerSeries}
{\cal F}(m_1,\cdots,m_n;c|x_1,\cdots,x_n) \equiv c^{\#}\exp\left(c\,{\cal F}_0 + {\cal F}_1 + c^{-1} {\cal F}_2+\cdots\right),
\fe
and the {\it semiclassical correlation function density} by taking derivatives
\ie
\label{CorrelationDensity}
F(m_1,\cdots,m_n;c|x_1,\cdots,x_n)={\partial\over \partial m_{1}}\cdots {\partial\over \partial m_{n}}F(m_{1},\cdots,m_n;c|x_1,\cdots,x_n),
\fe
which can be put into the form
\ie\label{FormalPowerSeries2}
F(m_1,\cdots,m_n;c|x_1,\cdots,x_n) = c^{\#}e^{c \, p}\left(q_0 + c^{-1}q_1+ \cdots\right),
\fe
where the $\#$'s in \eqref{FormalPowerSeries} and \eqref{FormalPowerSeries2} are $x_i$ and $m_i$ independent constants.

As an example, let us compute the two-point function density for $m \ge 1/24$.  The integrated two-point function in the $i$-th CFT is
\ie
{\cal F}^{(i)}(m_1,m_2|x_1,x_2)=\int_{c_i/24}^{{\rm min}(m_1,m_2)c_i}{\rho_i(h)\over x^{2h}}dh,
\fe
where $\rho_i(h)$ is the density of states.  By the Cardy formula \cite{Cardy:1986ie,Hartman:2014oaa} and assuming $x\le 1$, the integral is dominated in the $i\to\infty$ limit by the contribution from $m={\rm min}(m_1,m_2)$,
\ie
{\cal F}_0(m_1,m_2)=2\pi\sqrt{{1\over 6}\left({\rm min}(m_1,m_2)-{1\over 24}\right)}-2\,{\rm min}(m_1,m_2)\log x.
\fe
The semiclassical integrated two-point function is
\ie
{\cal F}(m_1,m_2|x)={1\over \sqrt{c}}\exp\Big[2\pi c\sqrt{{1\over 6}\left({\rm min}(m_1,m_2)-{1\over 24}\right)}-2c\,{\rm min}(m_1,m_2)\log x+{\cal O}(c^0)\Big],
\fe
where a logarithmic correction is also included.  The two-point function density is then given by
\ie
F(m_1,m_2|x) =  {\sqrt{c} \, \delta(m_1-m_2) \over x^{2m_1c}} {\exp\left[2\pi c\sqrt{{1\over 6}\left(m_1-{1\over 24}\right)} + {\cal O}(c^0) \right]}, \quad {\rm for}~~m_1,m_2\ge {1\over 24}.
\fe

In some special situations, we can keep track of a particular sequence of a set of operators $\{{\cal O}^{(i)}_{1},{\cal O}^{(i)}_{2},\cdots\}$, such as $\{ \sigma^n \}$ in product Ising models.  Some of their $n$-point functions may be analytically continued to the entire real line of the central charge $c_i$.  The analytically continued $n$-point function also admits a semiclassical expansion.

\section{Semiclassical Virasoro blocks}
\label{SemiclassicalCB}

In the limit of large central charge $c$ while taking the operator weights $h_i$ to scale with $c$ (fixed $m_i = h_i / c$), the Virasoro block admits a semiclassical expansion
\ie
F(h_{ext},h_{ext},h_{ext},h_{ext}, h, c | x) &= \exp\left[ - {c \over 6} f( m_{ext}, m | x) \right] g(m_{ext}, m, c | x),
\\
g(m_{ext},m,c|x)  &= \sum_{k = 0}^\infty c^{-k} {g_k(m_{ext}, m | x)}.
\fe
To the second order in the $x$-expansion,
\ie
\label{SecondOrderExpansion}
f(m_{ext}, m | x) &= 6(2m_{ext} - m) \log x - 3mx 
\\
& \hspace{.5in} - {3 (3m + 26m^2 + 16m_{ext}(m + 2m_{ext})) x^2 \over 8 (1+8m)} + {\cal O}(x^3),
\\
g_0(m_{ext},m|x)  &= 1 + {13 m x \over 2} + {(1 + 82 m + 1980 m^2 + 16224 m^3 + 43264 m^4) x^2 \over 32 (1 + 8 m)^2}
\\ 
& \quad + {16 m_{ext} (-3 + 208 m^2 + 88 m_{ext} + 4 m (5 + 104 m_{ext})) x^2 \over 32 (1 + 8 m)^2} + {\cal O}(x^3).
\fe
By computing to the sixth order in the $x$-expansion, the following properties are numerically observed to hold for fixed external weight $m_{ext} \leq 1/2$.
\begin{enumerate}
\item  $f'(m_{ext}, m | 1/2)$ is monotonically decreasing in $m$, and crosses zero only once.
\item  $f(m_{ext}, m_2 | x) - f(m_{ext}, m_1 | x)$ is monotonically decreasing in $x\in [0,1]$ for arbitrary internal weights $m_2 > m_1\geq 0$.
\item  $g_0(m_{ext},m|x) > 0$ for all internal weights $m \geq 0$ and cross ratios $0 \leq x < 1$.
\end{enumerate}

\section{The semiclassical limit of the fusion transformation}
\label{SemiclassicalFusion}

The fusion transformation relates the $s$-channel Virasoro blocks to the $t$-channel via a fusion matrix, which is defined in terms of special functions $\Gamma_b$, $S_b$, and $\Upsilon_b$ \cite{Ponsot:1999uf, Teschner:2001rv, Ponsot:2003ju}.  This appendix works out the semiclassical $b \to 0$ limit of these special functions, and computes the fusion transformation by saddle point approximation.

\subsection{The semiclassical limit of special functions}
\label{Semiclassical limit of the special functions}

The Barnes double gamma function $\Gamma_2(x|\omega_1,\omega_2)$ is defined by
\ie
\log\Gamma_2(x|\omega_1,\omega_2)={\partial\over \partial t}\sum^\infty_{n_1,n_2=0}(x+n_1\omega_1+n_2\omega_2)^{-t}\Big|_{t=0}.
\fe
The special functions $\Gamma_b(x)$, $S_b(x)$, and $\Upsilon_b(x)$ are defined by
\ie
\Gamma_b(x)={\Gamma_2(x|b,b^{-1})\over \Gamma_2(Q/2|b,b^{-1})},
\quad
S_b(x)={\Gamma_b(x)\over \Gamma_b(Q-x)},
\quad
\Upsilon_b(x)={1\over \Gamma_b(x)\Gamma_b(Q-x)},
\fe
and $\Gamma_b(x)$ function satisfies the periodic condition
\ie
\Gamma_b(x+b) = {\sqrt{2\pi} b^{bx - 1/2} \over \Gamma(bx)} \Gamma_b(x).
\fe
In the limit $b \to 0$, the periodic condition becomes a first order differential equation for $b^2 \log(\Gamma_b(y/b))$.  The solution gives the semiclassical limit of the special functions
\ie
b^2 \log\Gamma_b(y/b) &= (y - 1/2) \log \sqrt{2\pi} + {(y - 1/2)^2 \over 2} \log b - \int_{1/2}^y dz \log \Gamma(z) + {\cal O}(b),
\\
b^2 \log S_b(y/b) &= (2y - 1) \log\sqrt{2\pi} - \int_{1-y}^y dz \log \Gamma(z) + {\cal O}(b),
\\
b^2 \log \Upsilon_b(y/b) &= - (y - 1/2)^2 \log b + \int_{1/2}^y dz \log \gamma(z) + {\cal O}(b).
\fe

The expression for $S_b$ can be written in terms of polygamma functions
\ie
b^2 \log S_b(y/b) &= (2y - 1) \log\sqrt{2\pi} - \psi^{(-2)}(y) + \psi^{(-2)}(1-y) + {\cal O}(b),
\fe
where $\psi^{(-2)}(y)$ is the polygamma function of order $-2$.  It has the asymptotic behavior
\ie
\label{PolygammaAsymptotics}
- \psi^{(-2)}(i s) + \psi^{(-2)}(1-i s) =
\begin{cases}
{i \pi s^2 \over 2} - \left( {\pi \over 2} + i \log(2\pi) \right) s + {\cal O}(s^0) & s \to \infty,
\\
- {i \pi s^2 \over 2} + \left( {\pi \over 2} - i \log(2\pi) \right) s + {\cal O}(s^0) & s \to -\infty.
\end{cases}
\fe

\subsection{Correspondence between $s$- and $t$-channel blocks}
\label{Saddle correspondence across channels}

Consider the Virasoro algebra with central charge $c=1+6Q^2$, with weights parameterized by $h_\A = \A (Q-\A)$.  The $s$- and $t$- channel Virasoro blocks are related by the fusion formula \cite{Ponsot:1999uf, Teschner:2001rv, Ponsot:2003ju}
\ie
\label{FusionIntegral}
{\cal F}(h_{\A_{ext}}, h_{\A_s},c | x) = \int_{Q/2+i\bR_{\geq0}} d\A_t\, F_{\A_s \A_t}\!\! \begin{bmatrix}\A_{ext} & \A_{ext} \\ \A_{ext} & \A_{ext}\end{bmatrix}  {\cal F}(h_{\A_{ext}},h_{\A_t},c |1- x).
\fe
where for simplicity we specialize to the case $\A_1=\A_2=\A_3=\A_4=\A_{ext}$. The fusion matrix $F_{\A_s\A_t}$ is given by
\ie\label{FusionMatrix}
F_{\A_s\A_t} \!\!\begin{bmatrix}\A_{ext} & \A_{ext} \\ \A_{ext} & \A_{ext} \end{bmatrix} &= P_b(\A_s, \A_t, \A_{ext}) \times {1\over i} \int^{i\infty}_{-i\infty}ds \, T_b(\A_s, \A_t, \A_{ext}, s),
\fe
where
\ie
& P_b(\A_s, \A_t, \A_{ext}) = {\Gamma_b(2Q - 2\A_{ext} - \A_t)\Gamma_b(\A_t)^2\Gamma_b(Q-\A_t)^2\Gamma_b(Q-2\A_{ext}+\A_t)\over \Gamma_b(2Q-2\A_{ext}-\A_s)\Gamma_b(\A_s)^2\Gamma_b(Q-\A_s)^2\Gamma_b(Q-2\A_{ext}+\A_s)}
\\
& \hspace{.5in} \times {\Gamma_b(-Q+2\A_{ext}+\A_t)\Gamma_b(2\A-\A_t)\over \Gamma_b(-Q+2\A_{ext}+\A_s)\Gamma_b(2\A_{ext}-\A_s)} \times {\Gamma_b(2Q-2\A_s)\Gamma_b(2\A_s)\over \Gamma_b(Q-2\A_t)\Gamma_b(2\A_t-Q)},
\\
& T_b(\A_s, \A_t, \A, s) = {S_b(U_1+s)S_b(U_2+s)S_b(U_3+s)S_b(U_4+s)\over S_b(V_1+s)S_b(V_2+s)S_b(V_3+s)S_b(Q+s)},
\\
&U_1=\A_s,\quad U_2=Q+\A_s-2\A_{ext},\quad U_3=\A_s+2\A_{ext}-Q,\quad U_4=\A_s,
\\
&V_1=Q+\A_s-\A_t,\quad V_2=\A_s+\A_t,\quad V_3=2\A_s.
\fe
The semiclassical limit is achieved by taking $b \to 0$ while keeping $\eta_i = b\A_i$ finite.  In this limit, the Virasoro block exponentiates as
\ie
{\cal F}(h_{\A_{ext}},h_{\A},c |1- x) = \exp\left( - {1 \over b^2} f(h_{\A_{ext}} / b, h_{\A} / b, c | 1-x) + {\cal O}(1/b) \right),
\fe
and the integrals \eqref{FusionIntegral} and \eqref{FusionMatrix} can be computed by a saddle point approximation.  The terms proportional to $\log b$ all cancel in the exponent of the fusion matrix, and therefore the semiclassical limit is given by the simple replacement rule
\ie
& \Gamma_b(y/b) \to \exp\left(-{G(y) \over b^2}\right), \\
& S_b(y/b) \to \exp\left(-{G(y) - G(1-y) \over b^2}\right) = \exp\left( - {\psi^{(-2)}(y) - \psi^{(-2)}(1-y) \over b^2} \right),
\fe
where $G(y) \equiv \int_{1/2}^y dz \log\Gamma(z)$, and $\psi^{(-2)}(y)$ is the polygamma function of order $-2$.  Note that $G(y)$ is not a meromorphic function.

Let us define $s = i \sigma / b$.  The $\sigma$ integral in \eqref{FusionMatrix} is dominated by points maximizing the real part of $\log T_b$ along the integration contour.  In the following, we will assume $\eta_s + 2 \eta > 1$; otherwise infinitely many poles will cross the $\sigma$ integral contour as we take $b \to 0$.

\paragraph{Real $\bf \eta_s$ ($\bf h_s \leq c/24$).}  When $\eta_s$ is real, the $\sigma$ integral is dominated by the contribution at $\sigma = 0$
\ie
& \psi^{(-2)}(\eta_s+\eta_t)  - \psi^{(-2)}(1-\eta_s-\eta_t) - \psi^{(-2)}(-\eta_s+\eta_t) + \psi^{(-2)}(1+\eta_s-\eta_t) + \dotsb
\\
&= G(\eta_s + \eta_t) - G(1 - \eta_s - \eta_t) - G(- \eta_s + \eta_t) + G(1 + \eta_s - \eta_t) + \dotsb,
\fe
where the omitted terms $\dotsb$ do not involve $\eta_t$.  Note that the $\eta_t$-dependent factors cancel for when the $s$-channel is the vacuum $\eta_s = 0$.  The other $\eta_t$-dependent factors in the fusion matrix can be written as
\ie
& \lim_{b \to 0} b^2 \log P_b(\eta_s/b, \eta_t/b, \eta_{ext}/b) 
\\
& = - 2 H(\eta_t) - H(2\eta_{ext} + \eta_t - 1) - H(2\eta_{ext} - \eta_t) + G(1-2\eta_t) + G(2\eta_t -1),
\fe
where $H(y) \equiv G(y) + G(1-y) = \int_{1/2}^y dz \log \gamma(z)$.  Thus for real $\eta_s$, the $s$-channel block is equal to the fusion matrix times the $t$-channel block evaluated at the solution to 
\ie
\label{SaddleCorrespondenceEq}
0 &= - 2 \log\gamma(\eta_t) - \log\gamma(2\eta_{ext}+\eta_t-1) + \log\gamma(2\eta_{ext}-\eta_t) - 2\log\Gamma(1-2\eta_t) + 2\log\Gamma(2\eta_t-1)
\\
& \hspace{.5in} + \log\Gamma(\eta_s+\eta_t) + \log\Gamma(1-\eta_s-\eta_t) - \log\Gamma(-\eta_s+\eta_t) - \log\Gamma(1+\eta_s-\eta_t)
\\
& \hspace{1in} - {d \over d\eta_t} f(\eta_{ext}, \eta_t | 1-x).
\fe

\paragraph{Complex $\eta_s$ ($\bf h_s \geq c/24$).}  When $\eta_s \in 1/2 + i\bR$, the $\sigma$ integral has maximal real part of the exponent on the whole segment $- 2\,\text{Im}\,\eta_s \leq \sigma \leq 0$, where the $\eta_t$-dependent piece is
\ie
\left( {1 \over 2} - \eta_t \right) \left[ ( \log(1-2\eta_t) - \log(2\eta_t-1) \right].
\fe
Note that the $\eta_s$ dependence is gone.  In the fusion transformation, the $s$-channel block is dominated by the $t$-channel at the solution to
\ie
\label{SaddleCorrespondenceComplex}
0 &= - 2 \log\gamma(\eta_t) - \log\gamma(2\eta_{ext}+\eta_t-1) + \log\gamma(2\eta_{ext}-\eta_t)+ \log\gamma(2\eta_t-1) + \log\gamma(2\eta_t)
\\
& \hspace{0.5in} - {d \over d\eta_t} f(\eta_{ext}, \eta_t | 1-x).
\fe

\bibliography{refs} 
\bibliographystyle{JHEP}

\end{document}